\newcommand{\be}{\begin{equation}}
\newcommand{\ee}{\end{equation}}
\newcommand{\bea}{\begin{eqnarray}}
\newcommand{\eea}{\end{eqnarray}}
\newcommand{\df}{{\rm d}}
\newtheorem{Th}{Theorem}
\newtheorem{Co}{Corollary}
\newtheorem{Le}{Lemma}
\begin{document}
%
\title{A necessary and sufficient condition for minimum phase and implications for phase retrieval}
%
%
%

\author{Antonio Mecozzi,~\IEEEmembership{Fellow,~IEEE}
\thanks{Antonio Mecozzi is with the Department
of Department of Physical and Chemical Sciences, University of L'Aquila, 67100 L'Aquila, Italy e-mail: antonio.mecozzi@univaq.it.}
\thanks{Manuscript received April 19, 2005; revised September 17, 2014.}}

%
%

\markboth{IEEE TRANSACTIONS ON INFORMATION THEORY,~Vol.~13, No.~9, September~2014}%
{Shell \MakeLowercase{\textit{et al.}}: Bare Demo of IEEEtran.cls for Journals}
%



\maketitle

\begin{abstract}
We give a necessary and sufficient condition for a band-limited function $E(t)$ being of minimum phase, and hence for its phase being univocally determined by its intensity $|E(t)|^2$. This condition is based on the knowledge of $E(t)$ alone and not of its analytic continuation in the complex plane, thus greatly simplifying its practical applicability. We apply these results to find the class of all band-limited signals that correspond to distinct receiver states when the detector is sensitive to the field intensity only and insensitive to the field phase, and discuss the performance of a recently proposed transmission scheme able to linearly detect all distinguishable states.

\end{abstract}

\begin{IEEEkeywords}
Mathematical methods in physics, Phase retrieval, Coherent communications, Modulation.
\end{IEEEkeywords}

%
\IEEEpeerreviewmaketitle

\section{Introduction}
%
%
%
%

\IEEEPARstart{P}{hase} retrieval is a longstanding problem in many fields of physics and applied sciences \cite{Fienup,Burge,Taylor,Shechtman}. Sufficient conditions ensuring that the phase of the signal and hence the full $E(t)$ can be reconstructed from the knowledge of the intensity profile only $|E(t)|^2$ are well known \cite{Burge,Taylor,Cassioli,Mecozzi}. A necessary and sufficient condition is also well known, and it is based on the position in the complex plane of the zeros of $E(z)$, the analytic continuation of $E(t)$. Such a condition however is of limited practical use, because the analytic continuation of a function is an ill posed problem and hence far from being amenable to simple numerical solutions \cite{Gallicchio}. Even when analytic continuation is possible, like for instance for bandwidth limited $E(t)$, finding the zeros in the complex plane is a difficult numerical task. Being instead the numerical evaluation of $E(t)$ for $t$ real relatively trivial, a necessary and sufficient condition based on the properties of $E(t)$ for $t$ real only would be a very useful tool.  To the best of the author's knowledge, however, such a condition has not been reported yet. The purpose of this paper is to derive such a condition, which we will see is very similar to the well-established Nyqvist stability criterion \cite{Bode} of control theory. We will consider for simplicity to band-limited signals only, and leave generalizations to wider classes of signals to future studies. 

The outline of this paper is the following. After presenting the derivation of a necessary and sufficient condition ensuring that the phase of a band-limited field can be retrieved from its intensity profile, we apply this condition to find the class of band-limited signals corresponding to distinct states when the receiver is insensitive to the signal phase and sensitive to the field intensity only. The implications for the capacity of an optical system using square-law detection at the receiver are finally discussed.


While emphasis will be given to examples belonging to the field of telecommunications only, the new condition discussed here can be of help in all fields of physics and applied sciences where the problem of phase reconstruction from the intensity only is an important one \cite{Shechtman}.

\section{A necessary and sufficient condition for minimum phase}

In this paper, we define the Fourier transform of functions $F(t) \in L^1 \bigcap L^2$ as
\be \tilde F(\omega) = \int_{-\infty}^\infty \df t \exp(i \omega t) F(t). \label{10} \ee
The fact that $F(t) \in L^1$ insures that $F(\omega)$ is uniformly continuous for $\omega \in \mathbb R$ \cite{Katznelson}.

\begin{Le} \label{Le1} Assume a function $E_s(t) \in L^1 \bigcap L^2$, such that its Fourier transform $\tilde E(\omega)$ is $\tilde E_s(\omega) = 0$,  $\forall \omega < 0$. Then we have
\be E_s(t) = \frac i  \pi \, \mathrm{p.v.}\int_{-\infty}^\infty \frac{E_s(t')}{t'-t} \df t' , \label{kk} \ee
where with p.v. we refer to the Cauchy's principal value of the integral. \end{Le}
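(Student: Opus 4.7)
The plan is to prove this identity by contour integration, exploiting the fact that the one-sided Fourier support forces $E_s(t)$ to be the boundary value of a function holomorphic in the lower half-plane; the statement is then the reproducing Cauchy formula for the Hardy class $H^2(\mathbb{C}^-)$.

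First I would use the inverse Fourier representation
\[
E_s(t) = \frac{1}{2\pi}\int_0^\infty e^{-i\omega t}\,\tilde E_s(\omega)\,d\omega
\]
to extend $E_s$ to the complex variable $z = t - is$ with $s > 0$: the factor $e^{-\omega s}$ guarantees absolute convergence and defines a holomorphic function $E_s(z)$ on $\{\mathrm{Im}(z) < 0\}$ whose boundary values agree with $E_s(t)$ on $\mathbb{R}$ (Paley--Wiener, using $E_s \in L^1 \cap L^2$). Applying Cauchy's integral formula at $z_0 = t - i\epsilon$ with $\epsilon > 0$, using the positively oriented boundary of the lower half-plane---the real axis traversed from $+\infty$ to $-\infty$, closed by a large semicircular arc $C_R$ in $\{\mathrm{Im}(z) < 0\}$---yields
\[
E_s(t - i\epsilon) = -\frac{1}{2\pi i}\int_{-\infty}^{\infty}\frac{E_s(t')}{t' - t + i\epsilon}\,dt' + \frac{1}{2\pi i}\int_{C_R}\frac{E_s(z')}{z' - t + i\epsilon}\,dz',
\]
the minus sign in the first term reflecting the reversed direction on the real axis.

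The main obstacle is showing that the arc integral vanishes as $R \to \infty$: this is a Jordan-type estimate that rests on the pointwise decay of $E_s(z')$ inside the open lower half-plane together with the Hardy-space bound $\|E_s(\,\cdot\, - is)\|_{L^2} \le \|E_s\|_{L^2}$. I would invoke the standard $H^2$ reproducing-kernel result here rather than verify the estimate by hand. Granting this, I send $\epsilon \to 0^+$, using $E_s(t - i\epsilon) \to E_s(t)$ on the left and the Sokhotski--Plemelj identity
\[
\lim_{\epsilon \to 0^+}\frac{1}{t' - t + i\epsilon} = \mathrm{p.v.}\frac{1}{t' - t} - i\pi\,\delta(t' - t)
\]
on the right to obtain
\[
E_s(t) = -\frac{1}{2\pi i}\,\mathrm{p.v.}\!\int_{-\infty}^{\infty}\frac{E_s(t')}{t' - t}\,dt' + \frac{1}{2}E_s(t),
\]
from which (\ref{kk}) follows by algebraic rearrangement.
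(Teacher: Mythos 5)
Your argument is sound and reaches the correct identity, but it takes a genuinely different route from the paper. You work in the $z$-plane: extend $E_s$ holomorphically into the lower half-plane via the one-sided inverse Fourier representation, apply Cauchy's integral formula on the boundary, kill the arc, and finish with Sokhotski--Plemelj. The paper never leaves the real line or the Fourier domain: it writes $\tilde E_s(\omega)=\lim_{\epsilon\to0^+}u(\omega)e^{-\epsilon\omega}\tilde E_s(\omega)$ and recognizes that multiplication by $u(\omega)e^{-\epsilon\omega}$ is convolution with the kernel $\tfrac{i}{2\pi}(t'-t+i\epsilon)^{-1}$, then splits that kernel into its real and imaginary parts (a hands-on Sokhotski--Plemelj). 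Both arguments converge on exactly the same $\epsilon$-regularized Cauchy integral, and your final bookkeeping (orientation of the boundary, the factor $-\tfrac{1}{2\pi i}=\tfrac{i}{2\pi}$, the $\tfrac12 E_s(t)$ from the delta) is correct. What the paper's route buys is that it sidesteps the one step you had to defer: the vanishing of the semicircular arc contribution. That estimate is the genuinely delicate point of the contour approach --- for a general $E_s\in L^1\cap L^2$ with one-sided spectrum you do not get uniform pointwise decay of $E_s(Re^{i\phi})$ near the real axis, so a naive Jordan-lemma bound does not close, and citing the $H^2(\mathbb{C}^-)$ reproducing-kernel theorem (or proving the reproducing property via Plancherel, which is essentially what the paper does) is the correct and necessary move rather than an optional shortcut. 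A second, smaller caveat shared by both proofs: the pointwise inverse-Fourier representation of $E_s(t)$ on the real axis requires either $\tilde E_s\in L^1$ or an $L^2$/a.e.\ interpretation of the limits $\epsilon\to0^+$; your extension to $\mathrm{Im}(z)<0$ is fine because $e^{-\omega s}\tilde E_s(\omega)\in L^1$ by Cauchy--Schwarz, but the boundary limit $E_s(t-i\epsilon)\to E_s(t)$ holds in $L^2$ and almost everywhere rather than at every $t$. The paper glosses over the same point, so this is not a defect specific to your proposal.
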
 


\begin{proof} Condition $\tilde E_s(\omega) = 0$,  $\forall \omega < 0$ implies that $\tilde E_s(\omega) = u(\omega) \tilde E_s(\omega)$, where $u(\omega)$ is a Heaviside unit step function. If we write
\be \tilde E_s(\omega) = \lim_{\epsilon \to 0^+} u(\omega) \exp(-\epsilon \omega) \tilde E_s(\omega), \ee
inverse Fourier transformation gives 
\be E_s(t) = \frac{i}{2 \pi} \int_{-\infty}^\infty \lim_{\epsilon \to 0^+} \frac{1}{t'-t+ i \epsilon} E_s(t') \df t', \ee
that is
\be E_s(t) = \frac{i}{2 \pi} \int_{-\infty}^\infty \lim_{\epsilon \to 0^+} \frac{t'-t - i \epsilon}{(t'-t)^2 + \epsilon^2} E_s(t') \df t'. \ee
Being $\lim_{\epsilon \to 0^+} \epsilon/[(t'-t)^2 + \epsilon^2] = \pi \delta(t'-t)$ where $\delta(\cdot)$ is the Dirac delta distribution, we obtain
\be E_s(t) = \frac{i}{\pi} \int_{-\infty}^\infty \lim_{\epsilon \to 0^+} \frac{t'-t}{(t'-t)^2 + \epsilon^2} E_s(t') \df t', \ee
that is relation (\ref{kk}). \end{proof}

Consequence of Eq. (\ref{kk}) is that the real and imaginary parts of $E_s(t) = E_{s,r}(t) + i E_{s,i}(t)$ are the Hilbert transform of one another
\bea E_{s,i} (t) &=& \frac 1  \pi \, \mathrm{p.v.}\int_{-\infty}^\infty \df t' \frac{E_{s,r}(t')}{t'-t}, \\
E_{s,r}(t) &=& - \frac 1  \pi \, \mathrm{p.v.}\int_{-\infty}^\infty \df t' \frac{E_{s,i}(t')}{t'-t}. \eea
These relations are known in spectroscopy as Kramers Kronig relations \cite{Kramers,Kronig}.

Let $\beta$ be a strictly positive constant with $0<\beta \le 1$ and let us define $B = (1+\beta)/T$. Let $\tilde \mathcal{C}_\beta (0,B)$ be the class of functions  $E_{s}(t)$ of the form
\be E_s(t) = \sum_{n=-\infty}^\infty  a_n \exp\left[- i (1+\beta) \frac{\pi (t-nT) }{T} \right] H_\beta (t-nT), \label{mod1} \ee
where we assume that the sequence of $a_n \in \mathbb C$ has a compact support, namely for any $E_s(t) \in \tilde \mathcal{C}_\beta (0,B)$, exists an $N \in \mathbb Z$ such that $a_n = 0$ for $|n| > N$. The orthogonal set of functions $\{H_\beta(t-nT), n \in \mathbb Z\}$ are defined as 
\be H_\beta(t) = \frac 1 {\sqrt{T}} \frac{\sin\left[\pi \frac t T (1-\beta)\right] + 4 \beta \frac t T \cos\left[\pi \frac t T (1+\beta)\right]}{\pi \frac t T \left[1-\left(4 \beta \frac t T \right)^2\right]}. \label{Hb} \ee
The functions in $\tilde \mathcal{C}_\beta (0,B)$ belong to $L^1 \bigcap L^2$. Their Fourier transform is
\be \tilde E_{s}(\omega) = \sum_{n=-N}^N a_n \exp\left(i n T \omega \right) \tilde H_\beta\left[\omega - (1+\beta) \pi /T \right], \label{Ew1} \ee
where
\be \tilde H_\beta(\omega) = \sqrt T \hspace{0.1cm} C_\beta \left(\frac{\omega T} {2 \pi}\right), \ee
with
\be C_\beta(x) = \left\{
\begin{array}{ll} 
1, & |x| \le \frac{1-\beta} 2; \\
\cos\left[ \frac{\pi}{2 \beta} \left(|x|-\frac{1-\beta} 2 \right) \right], & \frac{1-\beta} 2 < |x| \le \frac{1+\beta} 2; \\
0, & |x| > \frac{1+\beta} 2. \end{array} \right.
\ee
The functions $\{H_\beta(t-nT), n \in \mathbb Z\}$ have a square-root raised cosine spectrum, and their orthogonality can be directly verified in the Fourier domain
\be \int_{-\infty}^\infty |H_\beta(\omega)|^2 \exp\left[i (n'-n) T \omega \right] \frac{\df \omega}{2 \pi}= \delta_{n,n'}, \label{ort} \ee
where the integral is facilitated by noting that $\sum_{n} |H_\beta(\omega- 2 \pi T n)|^2 = T$ and using the periodicity of $\exp(i n T \omega)$.  Being $E_s(t) \in L^1 \bigcap L^2$, its spectrum is a continuous function of $\omega$. In addition, it is zero for $\omega \notin (0, 2 \pi B )$. Finally, it is easy to verify that $E_s(z)$, the analytic continuation of $E_s(t)$ in the complex plane, is an entire function. 

If we define $\tilde \mathcal{C} (0,B) = \lim_{\beta \to 0^+} \tilde \mathcal{C}_\beta (0,B)$ , any function $E_s(t) \in \tilde \mathcal{C}(0,B)$ can be written in the form
\be E_{s}(t) = \hspace{-0.2cm} \sum_{n=-N}^N a_n \exp\left[- i \frac{\pi (t-nT) }{T} \right] H_0(t-nT) \label{mod} \ee
where
\be H_0(t) =  \lim_{\beta \to 0} H_\beta(t) = \frac{1}{\sqrt{T}} \hspace{0.1cm} \mathrm{sinc}\left(\frac{\pi t} T\right). \ee
The Fourier transform of (\ref{mod}) is
\be \tilde E_{s}(\omega) = \sum_{n=-\infty}^\infty a_n \exp\left(i n T \omega \right) \tilde H_0\left(\omega - \pi /T \right), \label{Ew} \ee
where
\be \tilde H_0(\omega) = \sqrt T \hspace{0.1cm} C_0\left(\frac{\omega T} {2 \pi}\right), \ee
and $C_0(x) = \lim_{\beta \to 0} C_\beta(x)$ is a function equal to 1 in the open interval $(-1/2, 1/2)$, equal to $1/\sqrt 2$ for $x=\pm 1/2$ and zero elsewhere. Being for $\omega, \omega' \in (0, 2\pi B)$
\be \sum_n  \tilde H_0(\omega)  \tilde H_0(\omega') \exp\left[-i n T (\omega-\omega') \right] \nonumber \\
= 2 \pi \delta\left(\omega-\omega' \right), \ee
any function $E_s(t) \in L^2$ band-limited to the interval $(0,2 \pi B)$ can be expressed in the form (\ref{mod}), where the $a_n$ are given by
\be a_n = \int \tilde H_0(\omega') \exp\left(-i n \frac{2 \pi}{T}\right) \tilde E_s(\omega') \frac{\df \omega'}{2 \pi}. \ee
Although the functions of the form (\ref{mod}) belonging to $\tilde \mathcal{C}(0,B)$ are in $L^2$, they are in general not in $L^1$. For this reason, in the following where we refer to the class of band-limited functions $\tilde \mathcal{C}(0,B)$, we will assume that its components are members of the class $\tilde \mathcal{C}_\beta(0,B)$ for $\beta >0$, hence always in the class $L^1 \bigcap L^2$, approaching arbitrarily close the limit $\beta = 0$. 

\begin{Th} \label{Th1} Let us assume $E_s(t) \in \tilde \mathcal{C}_\beta (0,B)$, and a constant $\bar E \ne 0$, and define $E(t) = E_s(t) + \bar E$ with $\bar E \ne 0$, such that $E(t) \ne 0$, $\forall t \in \mathbb{R}$ and the trajectory of $E(t)$ never encircles the origin for $t \in (-\infty, \infty)$. Then, the number of zeros of $E(t+i\tau)$ with $\tau <0$ is equal to the winding number (i.e, the number of windings) around the origin of the curve described by $E(t)$ when $t$ runs over the entire real axis from $t \to -\infty$ to $t \to \infty$. \end{Th}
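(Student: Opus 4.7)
The plan is to apply the argument principle to the entire function $E(z) = \bar E + E_s(z)$ on a rectangular contour that hugs the real axis and extends downward into the open lower half plane. Everything rests on two decay properties of $E_s(z)$ inherited from its one-sided compact Fourier support: writing $E_s(z) = (1/(2\pi))\int_0^{2\pi B} \tilde E_s(\omega)\,e^{-i\omega z}\,d\omega$ gives the pointwise bound $|E_s(t+i\tau)| \le (1/(2\pi))\int_0^{2\pi B} |\tilde E_s(\omega)|\,e^{\omega \tau}\,d\omega$, which tends to $0$ as $\tau\to-\infty$ uniformly in $t$ by dominated convergence. Fixing $T_0 > 0$ so that $|E_s(z)| < |\bar E|/2$ for all $\mathrm{Im}\,z \le -T_0$ traps every zero of $E$ in the open lower half plane inside the horizontal strip $-T_0 < \mathrm{Im}\,z < 0$, so their total count $N_-$ is finite.

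Next I would establish horizontal decay inside that strip. From the explicit finite-sum representation (\ref{mod1}) and the closed form (\ref{Hb}) of $H_\beta$, one reads off $|H_\beta(z)| = O(1/|z|^2)$ along horizontal lines, with the constant remaining bounded as the line moves through the compact set $\mathrm{Im}\,z \in [-T_0,0]$. Consequently $|E_s(t+i\tau)| \to 0$ as $|t|\to\infty$ uniformly for $\tau \in [-T_0,0]$, and for $R$ large enough the image $E(z)$ lies in the disk $\{|w-\bar E| < |\bar E|/2\}$ on both vertical sides $\mathrm{Re}\,z = \pm R$ and on the bottom side $\mathrm{Im}\,z = -T_0$ of the rectangle $\Gamma_R = [-R,R] \times [-T_0,0]$. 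I then apply the argument principle on $\Gamma_R$ oriented with the interior on the left: the total change of $\arg E$ around $\Gamma_R$ equals $2\pi N_-$. On the three ``deep'' sides the image $E(z)$ stays in the disk, which excludes the origin, so $\log E$ admits a single-valued branch and its variation along each side is controlled by $|E(\mathrm{endpoint}) - \bar E|$, hence tends to $0$ as $R\to\infty$. All of the winding is therefore carried by the top edge, which in the limit parameterizes the closed curve $t\mapsto E(t)$ for $t\in\mathbb R$ (the two infinite tails closing up at $\bar E$). Matching orientations identifies $N_-$ with the absolute winding number of that curve about the origin, with the natural sense of rotation being clockwise because $E_s$ carries only positive frequencies.

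The only genuinely delicate point is the uniform-in-$\tau$ horizontal-decay estimate: the $1/t^2$ tail of $H_\beta$ on the real axis is the standard decay of a square-root raised cosine pulse, but pushing the estimate into a strip of positive width requires checking that the extra exponential factors $e^{\pm \pi \tau (1\pm\beta)/T}$ produced when plugging a complex argument into the $\sin$ and $\cos$ of (\ref{Hb}) remain bounded (which they do, since $\tau$ lives in a compact interval), and that the apparent poles of (\ref{Hb}) at $t/T = \pm 1/(4\beta)$ are cancelled by the numerator, as they must be for $H_\beta$ to be entire. Once that tail estimate is in hand, the rest of the proof is a routine application of the argument principle, with a rectangular rather than a semicircular contour chosen precisely so that one can control the boundary integrals on the three deep sides by scalar endpoint differences.
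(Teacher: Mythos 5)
Your proof is correct and rests on the same core idea as the paper's: apply the argument principle to the entire function $E(z)$ on a contour consisting of the real axis closed through the lower half plane, and show that the closure contributes nothing so that the boundary integral reduces to the winding number of $t\mapsto E(t)$. The difference is in how the closure is handled. The paper closes with a semicircle at infinity and disposes of it in one line by asserting that $E(z)\to\bar E$ there, so the image path $v_C$ has vanishing length; this implicitly requires uniform convergence of $E_s(Re^{i\phi})\to 0$ over $\phi\in[\pi,2\pi]$, which near $\phi=\pi$ and $\phi=2\pi$ is really a statement about horizontal decay of $E_s$ rather than decay into the lower half plane. Your rectangular contour cleanly separates the two mechanisms: the one-sided compact Fourier support gives exponential decay as $\mathrm{Im}\,z\to-\infty$ (confining all lower-half-plane zeros to a strip, and killing the bottom edge), while the $O(1/|z|^2)$ tails of $H_\beta$ in a horizontal strip kill the vertical edges; the endpoint-difference argument for the net change of $\arg E$ on the three deep sides is the right way to make ``contributes nothing in the limit'' precise. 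You also flag the orientation/sign convention (clockwise rotation from positive-frequency content) that the paper leaves implicit. So the route is the same, but your version supplies the decay estimates that the paper's semicircle step glosses over; the only presentational quibble is that you assert finiteness of $N_-$ from the strip confinement alone, whereas it also needs the horizontal decay you establish in the following paragraph.
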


\begin{proof}
The function $E(z)$ is an entire function. Let $\Gamma$ be a contour encompassing the lower complex plane $z = t + i \tau$, which incorporates the real axis from $t = -\infty $ to $\infty$ and returns to $-\infty$ from the lower complex half-plane by a semicircle $C$ with radius $\rho \to \infty$.  By the Cauchy's argument principle, we have
\be I_\Gamma = \frac{1}{2 \pi i} \oint_{\Gamma} \frac{\dot{E}(z)}{E(z)} \df z = N_\mathrm{zeros} - N_\mathrm{poles}, \label{argument} \ee
where $\dot{E}(t) = \df E(t)/\df t$, $N_\mathrm{zeros}$ is the number of zeros and $N_\mathrm{poles}$ is the number of poles of $E(z)$ encircled by $\Gamma$.  The function $E(t + i \tau)$ does not have poles for $\tau <0$, $N_\mathrm{poles}=0$.  Using the substitution $E(z) = v$, we obtain
\be I_\Gamma = \frac{1}{2 \pi i} \int_{v_\Gamma} \frac{\df v}{v} = N_\mathrm{zeros}. \ee
Let us now split the integral $I_\Gamma$ in (\ref{argument}) into two parts, $I_\Gamma = I_r+I_C$, the first
\be I_r = \frac{1}{2 \pi i} \int_{v_r} \frac{\df v}{v} \ee
being the contribution of $E(z)$ when $z$ belongs to the real axis, the second
\be I_C = \frac{1}{2 \pi i} \int_{v_C} \frac{\df v}{v}, \ee
being the contribution of $E(z)$ when $z$ belongs to the infinite semicircle $C$. The integration path $v_C$ is made of the trajectory described by $E(z)$ for $z = R \exp(i \phi)$ when $\phi \in [\pi, 2 \pi]$.  Being $E(z) \to \overline E$ on the infinite semicircle, the length of the integration path $v_C$ tends to zero. In addition, being $\bar E \ne 0$, the modulus of the integrand tends to a finite constant on the infinite semicircle, $|1/v| \to 1/|\bar{E}|$, so that $I_C = 0$. Consequently, only the first term survives,
\be I_r = \frac{1}{2 \pi i} \int_{v_r} \frac{\df v}{v} = N_\mathrm{zeros}. \label{IR} \ee
The curve $v_r$ is made of the trajectory described by $E(t)$ when $t$ runs over the entire time axis. Such curve is closed because $E(t)$ tends to $\overline E$ for both $t \to \infty$ and $t \to -\infty$. The left hand side of (\ref{IR}) is then equal to the number of times the curve $v_r$ encircles the origin. This shows that the winding number of the curve described by $E(t)$ when $t$ runs from $t \to -\infty$ to $t \to \infty$ is equal $N_\mathrm{zeros}$.  \end{proof}

Signals such that $\tilde E(\omega) = 0$ for $\omega < 0$ are called \textit{minimum phase signals} if, beside having no poles of $E(t+i \tau)$ with $\tau <0$, they have also no zeros with $\tau <0$. 
Theorem \ref{Th1} shows that a necessary and sufficient condition for a signal $E(t)$ to be of minimum phase is that the curve described by $E(t)$ when $t$ runs from $t \to -\infty$ to $t \to \infty$ does not encircle the origin. This theorem permits to decide whether a signal is of minimum phase or not by numerically evaluating the values of $E(t)$ for $t$ real only, avoiding the difficult, ill-conditioned \cite{Gallicchio}, analytic continuation of $E(t)$ in the complex plane. Notice that, although non zero, the constant bias $\overline E$ can be arbitrarily small.

\begin{Th} \label{Co10} Let us assume $E_s(t) \in \tilde \mathcal{C}_\beta(0,B)$, and a constant $\bar E \ne 0$, and define $E(t) = E_s(t) + \bar E$ with $\bar E \ne 0$, such that $E(t) \ne 0$, $\forall t \in \mathbb{R}$ and the trajectory of $E(t)$ never encircles the origin for $t \in (-\infty, \infty)$. Then, it is possible to define 
\be G(z) = \log\left[ \frac{E(z)}{\overline{E}} \right], \label{Ft} \ee
with the determination of the logarithm chosen such that a) $\lim_{t \to -\infty} G(t+0i) = 0$ and b) $G(z)$ is a holomorphic function in the lower half plane including the real axis. With this choice, if we define $G(t)$ as the restriction of $G(z)$ on the real axis, $G(t) \in L^1 \bigcap L^2$, and its Fourier transform $\tilde G(\omega)$ is such that $G(\omega) = 0$ for $\omega < 0$. \end{Th}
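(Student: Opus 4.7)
The plan is to use Theorem~\ref{Th1} to secure the non-vanishing needed for the logarithm to be single-valued, and then to reverse the Cauchy-kernel argument of Lemma~\ref{Le1} to obtain the Fourier support statement.

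First, the hypothesis that $E(t)$ does not wind around the origin combined with Theorem~\ref{Th1} shows $E(z)\ne 0$ in the open lower half plane; together with $E(t)\ne 0$ on the real axis and the absence of poles (since $E$ is entire), this means $E(z)/\bar E$ is a nowhere-vanishing continuous function on the simply connected closed lower half plane, holomorphic in its interior. A single-valued continuous lift $G=\log(E/\bar E)$ therefore exists there, holomorphic in the open lower half plane and continuous up to the real axis. Because $E_s(t)$ is a finite sum of terms $H_\beta(t-nT)$, each decaying at infinity, $E(t)/\bar E\to 1$ as $|t|\to\infty$; the normalization $\lim_{t\to-\infty}G(t+0i)=0$ selects the principal branch on that ray and so determines $G$ throughout the closed lower half plane.

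Next I would verify $G\in L^1\cap L^2$ using $|\log(1+\zeta)|\le 2|\zeta|$ for $|\zeta|\le 1/2$: outside a sufficiently large interval this gives $|G(t)|\le 2|E_s(t)|/|\bar E|$, so the tails inherit the $L^1\cap L^2$ integrability of $E_s$, while on any compact interval $G$ is continuous and bounded because $E(t)$ is continuous and bounded away from $0$. For the Fourier support I would then apply Cauchy's integral formula to $G$ on the contour $[-R,R]\cup C_R^-$, with $C_R^-$ the lower semicircle of radius $R$, at the interior point $t-i\epsilon$ with $\epsilon>0$. Sending $R\to\infty$ eliminates the semicircular piece, and then sending $\epsilon\to 0^+$ while using the distributional identity $(x+i0)^{-1}=\mathrm{p.v.}(1/x)-i\pi\delta(x)$ -- exactly as in the proof of Lemma~\ref{Le1} -- produces the Kramers--Kronig representation $G(t)=(i/\pi)\,\mathrm{p.v.}\int G(t')/(t'-t)\,\df t'$. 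Taking Fourier transforms and noting that the Fourier transform of $\mathrm{p.v.}(1/t)$ is $i\pi\,\mathrm{sgn}(\omega)$ collapses this identity to $\tilde G(\omega)=\mathrm{sgn}(\omega)\tilde G(\omega)$, equivalent to $\tilde G(\omega)=0$ for $\omega<0$.

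The main obstacle will be justifying that the semicircular contribution vanishes as $R\to\infty$. For directions bounded away from the real axis, the Paley--Wiener representation $E_s(z)=(2\pi)^{-1}\int_0^{2\pi B}\tilde E_s(\omega')e^{-i\omega' z}\,\df\omega'$ yields exponential decay of $|E_s(z)|$ in $|\mathrm{Im}\,z|$, and hence of $|G(z)|$, which easily kills the integral. Near the real axis that exponential gain disappears and one must instead combine Riemann--Lebesgue decay of $E_s$ along horizontal lines with a Jordan-type angular estimate $\int_{-\pi}^{0}e^{\omega' R\sin\theta}\,\df\theta\le\pi/(\omega' R)$. After a Fubini rearrangement, the question reduces to the finiteness of $\int_0^{2\pi B}|\tilde E_s(\omega')|/\omega'\,\df\omega'$, which holds because $\tilde E_s$ is continuous, vanishes at the origin, and has at worst a corner there.
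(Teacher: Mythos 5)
Your proposal is correct and shares the paper's overall architecture --- construct $G$ as a single-valued holomorphic logarithm on the closed lower half plane using Theorem~\ref{Th1} together with the no-winding and no-real-zero hypotheses, normalize it at $t \to -\infty$, and establish $G \in L^1 \bigcap L^2$ from the $O(1/t^2)$ tails of $E_s$ --- but it diverges at the spectral-support step. The paper evaluates $\tilde G(\omega)$ for $\omega < 0$ directly, closing the contour of $\int G(t) e^{i\omega t}\,\df t$ in the lower half plane; there the decaying factor $e^{-i|\omega| z}$ supplies a Jordan-lemma gain of order $1/(|\omega| R)$, so it suffices that $M(R)=\max_\phi |G(R e^{i\phi})| \to 0$. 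You instead apply Cauchy's formula to $G(z)/(z - (t - i\epsilon))$, recover the Hilbert-transform identity $G(t) = (i/\pi)\,\mathrm{p.v.}\int G(t')/(t'-t)\,\df t'$ via Sokhotski--Plemelj, and read off $\tilde G = \mathrm{sgn}(\omega)\,\tilde G$ --- in effect you prove the converse of Lemma~\ref{Le1} (the Titchmarsh-theorem direction) rather than performing a Fourier-side contour computation. The price of your route is that the Cauchy kernel decays only like $1/R$, which exactly cancels the arc length, so you must show $\int_{-\pi}^{0} |G(R e^{i\theta})|\,\df\theta \to 0$ with no exponential help from the kernel; your Fubini reduction to the finiteness of $\int_0^{2\pi B} |\tilde E_s(\omega')|/\omega'\,\df\omega'$ (which indeed holds, since the square-root raised-cosine spectrum vanishes linearly at the band edge $\omega' = 0$) handles this correctly, and is actually more quantitative than the paper's somewhat terse assertion that $M(R) \to 0$ by pointwise convergence on the arc. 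A net benefit of your approach is that the identity you establish is exactly Eq.~(\ref{kk1}), which the paper only obtains afterwards by feeding this theorem back into Lemma~\ref{Le1}, so Theorem~\ref{Th3} comes essentially for free.
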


\begin{proof} The continuation of $E(t) = E_s(t) + \bar E$ in the complex plane, $E(z)$ is an entire function, because $E_s(z)$ is an entire function. It is possible to define a function $G(z)$ holomorphic in the lower half plane including the real axis and such that $E(z) = \bar E \exp[G(z)]$ by the following procedure. Let us construct $G(z)$ for any $z$ in the lower half plane by analytic continuation from an arbitrary point $z_0$ in the lower half plane using
\be  G(z) = G(z_0) + \int_{z_0}^z \frac{\dot E(z')}{E(z')} \df z', \label{Gz1} \ee
where the integration path from $z_0 = $ to $z$ is contained entirely in the lower half plane but otherwise arbitrary. The poles of $\dot{E}(z)/E(z)$ are the poles and the zeros of $E(z)$, $E(z)$ is an entire function and no zeros of $E(z)$ exist in the lower half plane by virtue of theorem \ref{Th1} and on the real axis by assumption, so that $\dot{E}(z)/E(z)$ is a holomorphic function in the lower half plane including the real axis and hence $G(z)$ is also holomorphic in the same domain. The function $G(z)$ is however not unique, because it depends on the choice of $z_0$ and of $G(z_0)$. We will choose $z_0$ on the real axis such that $z_0 = \lim_{t \to -\infty} (t + 0 i)$, and we will take for $G(z_0)$ the principal value of $\log[E(z_0)/\bar E]$, namely
\be G_\infty = \lim_{t \to -\infty} \log\left[1+\frac{E_s(t)}{\bar E}\right] = 0. \label{Gz0} \ee
We have then
\be G(z) = \lim_{t \to -\infty} \int_{t+0i}^z \frac{\dot E(z')}{E(z')} \df z'. \label{Gz} \ee
Let us now consider the integral in the complex plane $z = t + i \tau$ for $\omega = -|\omega| <0$,
\be J_\Gamma = \oint_{\Gamma} G(z) \exp(-i |\omega| z) \df z, \label{JGm} \ee
where $\Gamma$ is again a contour that includes the real axis and returns to $t = -\infty$ from the semicircle in the lower half plane $z = R \exp(i \phi)$ with $\phi \in [\pi, 2 \pi]$. The integral of Eq. (\ref{JGm}) can be decomposed into two terms
\be J_\Gamma = \tilde G(\omega) + J_C \ee
the first,
\be \tilde G(\omega) = \int_{-\infty}^\infty G(t) \exp\left(i \omega t\right) \df t \label{Gw}, \ee
being the Fourier transform of $G(t)$, and the second, $J_C$, being the contribution of the infinite semicircle. The second contribution is equal to $J_C = \lim_{R \to \infty} J_R$, where
\be J_R = i \int_{\pi}^{2 \pi} G[R \exp(i \phi)] \exp\{i [-|\omega| R \exp(i \phi)+ \phi]\} R \df \phi. \ee
In the absence of poles in the lower half-plane we have $J_\Gamma = 0$ and hence
\be \tilde G(\omega) = - J_C \quad \forall \omega <0. \label{Gw0} \ee
We have 
\bea |J_R| &\le& \int_{\pi}^{2 \pi} \left|G[R \exp(i \phi)] \right| \exp[|\omega| R \sin(\phi)] R \df \phi \nonumber \\
&\le& M(R) \int_{\pi}^{2 \pi} \exp[|\omega| R \sin(\phi)] R \df \phi, \eea
where $M(R) = \max_{\phi \in [\pi, 2 \pi]} \left\{\left|G[R \exp(i \phi)] \right|\right\}$. We have then
\be |J_R| \le 2 M(R) \int_{0}^{\pi/2} \exp[-|\omega| R \sin(\phi)] R \df \phi, \ee
where the integral at right-hand side tends to a constant
\be \lim_{R \to \infty} \int_{0}^{\pi/2} \exp[ - |\omega| R \sin(\phi)] R \df \phi = \frac 1 {|\omega|}. \ee
Using now Eq. (\ref{Gz}) on the semicircle of radius $R$ we obtain for $\phi \in [\pi, 2 \pi]$
\bea  G_\phi &=& \lim_{R \to \infty} G[R \exp(i \phi)] \nonumber \\
&=& \lim_{R \to \infty} \log\left\{1 + \frac{E_s[R\exp(i \phi)]}{\bar E} \right\} = 0, \label{GR} \eea
because the principal value of the logarithm is used in Eq. (\ref{Gz0}) and, being $\lim_{t \to -\infty} G(t+0i) = G_{\phi = \pi} = 0$, the principal value should be used, for continuity, in $G_\phi$ for all $\phi \in (\pi, 2 \pi]$. Consequently we also have $\lim_{R \to \infty} M(R) = \max_{\phi \in [\pi, 2 \pi]} G_\phi = 0$, and hence $J_C =  \lim_{R \to \infty} J_R = 0$. 

The above results implies that also $\lim_{t \to \infty} G(t+0i) = G_{\phi = 2 \pi} = 0$, so that asymptotically, on the real axis
\be G(t) = \log\left[1+\frac{E_s(t)}{\bar E} \right] \to \frac{E_s(t)}{\bar E}, \quad |t| \to \infty. \ee
Being $E(t) > \bar E$, $\forall t \in \mathbb R$, and $E(t)$ finite, the integrals of both $|G(t)| $ and $|G(t)|^2$ over a finite interval are finite. At infinity, being $E_s(t) \in \tilde \mathcal{C}_\beta(0,B)$ for $\beta >0$, Eqs. (\ref{mod1}) and (\ref{Hb}) show that $|E_s(t)| \simeq 1/|t|^2$ for $|t| \to \infty$ and hence the integrals of $|G(t)| $ and $|G(t)|^2$ also converge at infinity. Consequently, $G(t) \in L^1 \bigcap L^2$. \end{proof}

The above theorem shows that the ambiguity in the determination of the logarithm to be used in the restriction of $G(z)$ to the real axis, $G(t) = \log[E(t)/\bar E]$, is removed by the prescriptions that both $\lim_{t \to -\infty} G(t) = 0$ and $G(t)$ is a continuos function of $t \in \mathbb R$.

\begin{Th} \label{Th3} Let us assume $E_s(t) \in \tilde \mathcal{C}_\beta(0,B)$, and a constant $\bar E \ne 0$, and define $E(t) = E_s(t) + \bar E$ with $\bar E \ne 0$, such that $E(t) \ne 0$, $\forall t \in \mathbb{R}$ and the trajectory of $E(t)$ never encircles the origin for $t \in (-\infty, \infty)$. Then, the phase of $E(t)$ can be reconstructed by a logarithmic Hilbert transform
\be \phi(t) = \overline{\phi} + \frac{1}{2 \pi} \mathrm{p.v.} \int_{-\infty}^{\infty} \df t' \frac{\log\left[ |E(t')|^2 \right]}{t'-t}, \label{phi} \ee
where $E(t) = |E(t)| \exp[i \phi(t)]$ and $\overline{E} = |\overline{E}| \exp(i \overline{\phi})$. \end{Th}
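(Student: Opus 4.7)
The plan is to reduce Theorem \ref{Th3} to an application of Lemma \ref{Le1} with $E_s$ replaced by the logarithm function $G$ constructed in Theorem \ref{Co10}. All the analytic groundwork has already been done in that theorem, so the argument should be short.

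First I would invoke Theorem \ref{Co10} directly. Under the hypotheses of Theorem \ref{Th3}, it produces a function $G(z) = \log[E(z)/\overline{E}]$ that is holomorphic in the closed lower half plane, whose restriction $G(t)$ to the real axis lies in $L^1\cap L^2$, and whose Fourier transform satisfies $\tilde G(\omega)=0$ for $\omega<0$. This is exactly the hypothesis required by Lemma \ref{Le1}, so the real and imaginary parts of $G(t)$ obey the Kramers--Kronig pairing displayed immediately after that lemma.

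Next I would identify these real and imaginary parts explicitly. Writing $E(t)=|E(t)|\exp[i\phi(t)]$ and $\overline{E}=|\overline{E}|\exp(i\overline{\phi})$, and using the continuous determination of the logarithm fixed by Theorem \ref{Co10} (so $\phi(t)-\overline{\phi}$ is the continuous branch pinned to $0$ at $t\to-\infty$), one has
\be
\mathrm{Re}\,G(t) \;=\; \tfrac{1}{2}\log\!\left[\frac{|E(t)|^2}{|\overline{E}|^2}\right], \qquad \mathrm{Im}\,G(t) \;=\; \phi(t)-\overline{\phi}.
\ee
Substituting these into the relation $E_{s,i}(t) = (1/\pi)\,\mathrm{p.v.}\!\int E_{s,r}(t')/(t'-t)\,\df t'$ (applied to $G$) gives
\be
\phi(t)-\overline{\phi} \;=\; \frac{1}{2\pi}\,\mathrm{p.v.}\!\int_{-\infty}^{\infty}\frac{\log\!\left[|E(t')|^2/|\overline{E}|^2\right]}{t'-t}\,\df t'.
\ee
Finally I would split $\log[|E(t')|^2/|\overline{E}|^2]=\log[|E(t')|^2]-\log[|\overline{E}|^2]$ and drop the constant term. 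The one technical point worth a line of justification is precisely this step: the constant $\log[|\overline{E}|^2]$ is not in $L^1$, so one cannot simply invoke Lemma \ref{Le1} for it, but the symmetric principal-value integral $\mathrm{p.v.}\!\int_{-\infty}^{\infty}(t'-t)^{-1}\df t'$ vanishes by the standard symmetric-limits cancellation, so its contribution is zero. Rearranging the resulting identity yields formula (\ref{phi}), completing the proof.

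The main obstacle, if there is any, is simply the bookkeeping around this constant and around the branch of the logarithm: one has to be careful that the imaginary part extracted from $G(t)$ is $\phi(t)-\overline{\phi}$ with the \emph{continuous} branch (not a $2\pi$-wrapped one), and this is exactly what Theorem \ref{Co10} delivers via $G_{\phi=\pi}=G_{\phi=2\pi}=0$. Everything else is mechanical.
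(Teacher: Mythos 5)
Your proposal is correct and follows essentially the same route as the paper: invoke Theorem \ref{Co10} to get $G\in L^1\bigcap L^2$ with $\tilde G(\omega)=0$ for $\omega<0$, apply Lemma \ref{Le1}, equate imaginary parts, and discard the constant $\log|\overline E|$ term via the vanishing symmetric principal-value integral. The paper's proof is exactly this, including the explicit remark that $\mathrm{p.v.}\int \log|\overline E|/(t'-t)\,\df t'=0$.
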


\begin{proof} Under the hypotheses stated, theorem \ref{Co10} ensures that $G(t) \in L^1 \bigcap L^2$ and $\tilde G(\omega) = 0$ for $\omega < 0$ and, hence, the hypotheses of lemma \ref{Le1} are verified, so that 
\be G(t) = \frac i \pi \, \mathrm{p.v.}\int_{-\infty}^\infty \df t' \frac{G(t')}{t'-t}. \label{kk1} \ee
Equation (\ref{phi}) is then readily obtained by equating the imaginary parts of both sides of Eq. (\ref{kk1}), using $G(t) = \log|E(t)| - \log \overline E + i [\phi(t)-\overline \phi]$ and that 
\be \frac{1}{\pi} \mathrm{p.v.} \int_{-\infty}^{\infty} \df t' \frac{\log |\overline{E}|}{t'-t} = 0. \ee
\end{proof} 

Notice that, being
\be \frac{1}{2 \pi} \mathrm{p.v.}  \int_{-\infty}^{\infty} \df t \int_{-\infty}^{\infty} \df t' \frac{\log\left[ |E(t')|^2 \right]}{t'-t} = 0, \ee
the phase bias $\overline \phi$ is also the time average of $\phi(t)$ 
\be \overline \phi = \lim_{T \to \infty} \frac 1 T \int_{-T/2}^{T/2} \df t \phi(t). \ee

From the necessary and sufficient condition of theorem \ref{Th3} more restrictive sufficient conditions can be derived. One is the following

\begin{Co} \label{re} Let us assume $E_s(t) \in \tilde \mathcal{C}_\beta(0,B)$, and define $E(t) = E_s(t) + \bar E$, with $\bar E \ne 0$ a complex constant. Then, $E(t)$ is of minimum phase, and hence its phase can be reconstructed by the logarithmic Hilbert transform (\ref{phi}) when $\exists \phi_0 \in [0, 2 \pi)$ such that $\mathrm{real}[E(t) \exp(i \phi_0)] > 0$, $\forall t \in \mathbb R$.  \end{Co}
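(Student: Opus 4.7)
The plan is to reduce the corollary to Theorem~\ref{Th3} by verifying that the positivity hypothesis implies its two real-line requirements, namely that $E(t) \ne 0$ for every $t \in \mathbb R$ and that the trajectory of $E(t)$ does not encircle the origin. The remaining assumptions $E_s \in \tilde{\mathcal C}_\beta(0,B)$ and $\bar E \ne 0$ are given in the statement, so once these two points are established, formula (\ref{phi}) follows at once.

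Nonvanishing is immediate: if $E(t_0) = 0$ at some point, then $\mathrm{real}[E(t_0) \exp(i \phi_0)] = 0$, which contradicts the strict positivity. For nonencirclement, I would note that the hypothesis confines the entire trajectory to the open half-plane $H_{\phi_0} = \{z \in \mathbb C : \mathrm{real}[z \exp(i \phi_0)] > 0\}$. Because $E(t)$ never vanishes, a continuous branch of $\arg[E(t) \exp(i \phi_0)]$ exists and takes values in $(-\pi/2, \pi/2)$ for every $t \in \mathbb R$, so the total variation of $\arg E(t)$ over the real line is strictly less than $\pi$ and in particular cannot attain $2\pi$, ruling out any encirclement of the origin. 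To phrase this as a proper winding-number statement, I would close the curve at infinity by using that $E_s(t) \to 0$ as $|t| \to \infty$ (each term of the finite sum in (\ref{mod1}) decays like $1/t^2$), so $E(t) \to \bar E$ at both ends; the closed trajectory then lies in the simply connected set $\overline{H_{\phi_0}} \setminus \{0\}$ and has winding number zero around the origin. The hypotheses of Theorem~\ref{Th3} are thereby verified, and (\ref{phi}) follows.

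The one place where a small amount of care is needed is this asymptotic closing at $\bar E$: passing to the limit in the strict inequality only yields $\mathrm{real}[\bar E \exp(i \phi_0)] \ge 0$, so $\bar E$ may sit on the boundary of $H_{\phi_0}$ rather than in its interior. This is harmless because $\bar E \ne 0$ by assumption, so the closed curve still avoids the origin, but it is the one step that warrants an explicit word rather than being dismissed as pure bookkeeping. Everything else is a direct application of the earlier results.
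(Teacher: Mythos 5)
Your proposal is correct and takes essentially the same route as the paper, which likewise proves Corollary~\ref{re} by observing that confinement to a half-plane prevents the trajectory of $E(t)$ from encircling the origin and then invoking Theorem~\ref{Th3}; you simply spell out the details (nonvanishing, the argument staying in $(-\pi/2,\pi/2)$, closing the curve at $\bar E$) that the paper's one-line proof leaves implicit. Your remark that the limit point $\bar E$ may land on the boundary of the half-plane, harmless since $\bar E \ne 0$, is a worthwhile clarification the paper omits.
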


\begin{proof} Under the hypotheses of the theorem, the curve $E(t)$ never encircles the origin, hence the hypotheses of theorem \ref{Th3} are satisfied.  \end{proof} 

This corollary can also be proven independently, under slightly less restrictive conditions.

\begin{Th} \label{re1} Let $E_s(t) \in L^1 \bigcap L^2$ such that $\tilde E_s(\omega) = 0$, $\forall \omega < 0$, and define $E(t) = E_s(t) + \bar E$, with $\bar E \ne 0$ a complex constant. Then, the phase of $E(t)$ can be reconstructed by the logarithmic Hilbert transform (\ref{phi}) when $\exists \phi_0 \in [0, 2 \pi)$ such that $\mathrm{real}[E(t) \exp(i \phi_0)] > 0$, $\forall t \in \mathbb R$.  \end{Th}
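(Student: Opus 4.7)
The strategy is to imitate the proof of Theorem~\ref{Th3} while bypassing Theorem~\ref{Th1}, which relied on $E_s(z)$ being entire. Since $E_s\in L^2(\mathbb{R})$ with $\tilde E_s(\omega)=0$ for $\omega<0$, the Paley--Wiener theorem provides an extension of $E_s$ to a function in the Hardy class $H^2$ of the open lower half plane, with $E_s(t)$ as its boundary values on the real axis. Consequently $E(z)=E_s(z)+\bar E$ is holomorphic in the open lower half plane and continuous up to the real axis.

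The next step, replacing Theorem~\ref{Th1}, is to show $E(z)\ne 0$ throughout the closed lower half plane using the stronger hypothesis on the real axis. Setting $F(z)=E(z)\exp(i\phi_0)$, I would invoke the Poisson integral representation available for $H^2$ (and the fact that any constant is its own Poisson integral) to obtain
\begin{equation}
\mathrm{real}[F(x+iy)]=\frac{1}{\pi}\int_{-\infty}^{\infty}\frac{-y}{(x-t)^2+y^2}\,\mathrm{real}[F(t)]\,\df t,
\end{equation}
for every $y<0$, which is strictly positive by the hypothesis $\mathrm{real}[F(t)]>0$. Thus $F$ maps the closed lower half plane into the open right half plane, and $E(z)\ne 0$ there.

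Because $F$ avoids the non-positive real axis, the principal branch of $\log F$ is well-defined and holomorphic in the closed lower half plane, so I can set $G(z)=\log F(z)-i\phi_0-\log\bar E$, so that $G(z)=\log[E(z)/\bar E]$ and $\mathrm{Im}\,G$ is bounded by $\pi/2$ plus a constant. The asymptotic estimate $G(t)\sim E_s(t)/\bar E$ for $|t|\to\infty$, combined with the local control from $\mathrm{real}\,F>0$, gives $G\in L^1\cap L^2$ from $E_s\in L^1\cap L^2$.

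Finally, I would repeat the contour-integration argument of Theorem~\ref{Co10}: close the Fourier integral of $G$ in the lower half plane by a semicircle of radius $R$ and show the semicircular contribution vanishes as $R\to\infty$. This yields $\tilde G(\omega)=0$ for $\omega<0$, so Lemma~\ref{Le1} applies to $G$, and equating imaginary parts of the resulting identity produces (\ref{phi}). The main obstacle is the semicircular decay estimate: unlike in Theorem~\ref{Co10}, the extension $E_s(z)$ need not vanish uniformly at infinity, so the decay of $|G|$ on large semicircles in the lower half plane must be extracted from the Poisson representation of $E_s$ and the $L^1$ integrability of its boundary values, with the limits $\phi\to\pi^+$ and $\phi\to 2\pi^-$ near the real axis treated carefully.
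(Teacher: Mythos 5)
Your route is genuinely different from the paper's. The paper's proof of Theorem \ref{re1} never leaves the real axis: it writes $G(t)=\log(Z_0)+\log[1+F(t)]$ with $F(t)=[E(t)/\bar E-Z_0]/Z_0$, expands the logarithm in the power series (\ref{Z_0}) (legitimate precisely on the half-plane that the hypothesis $\mathrm{real}[E(t)e^{i\phi_0}]>0$ places $F(t)$ in, as $R\to\infty$), and then observes that since $\tilde F(\omega)$ is supported in $[0,\infty)$ so is the spectrum of every power $F^n(t)$, hence of $G(t)$; Lemma \ref{Le1} then finishes. No analytic continuation, no contour, no zero-counting. Your argument instead continues $E_s$ into the lower half plane via Paley--Wiener/$H^2$, and uses the positivity of the Poisson kernel to show $\mathrm{real}[E(z)e^{i\phi_0}]>0$ there, so $E(z)$ has no zeros in the closed lower half plane. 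That Poisson step is a clean and correct substitute for Theorem \ref{Th1} which, unlike the paper's Theorem \ref{Th1}, does not need $E_s(z)$ to be entire or band-limited, and it gives geometric information (no zeros below the axis) that the paper's series argument never exhibits. Up to that point your proof is sound.

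The gap is in your last step, and you have correctly located it yourself: deducing $\tilde G(\omega)=0$ for $\omega<0$ by closing the contour as in Theorem \ref{Co10} requires $G(z)\to 0$ uniformly on large semicircles in the lower half plane, and for a general $E_s\in L^1\cap L^2$ with one-sided spectrum this is not available. The Cauchy--Schwarz bound $|E_s(t+i\tau)|\le\|\tilde E_s\|_2/\sqrt{2|\tau|}$ controls the deep part of the semicircle but degenerates as the arc approaches the real axis ($\phi\to\pi^+$, $\phi\to 2\pi^-$), exactly where you would need it; the Poisson representation does not rescue this without additional hypotheses. This is precisely the difficulty the paper's power-series device is designed to bypass: the one-sidedness of $\tilde G$ is obtained algebraically (products of one-sided spectra are one-sided), with no estimate at infinity in the complex plane needed. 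To close your version without importing the series trick, replace the contour argument by a Hardy-space statement: you have already shown $\mathrm{Im}\,G$ is bounded, so $e^{\pm iG}$ is bounded holomorphic, hence $G$ is of bounded type in the lower half plane; a holomorphic function of bounded type with boundary values in $L^2$ lies in $H^2$, and $H^2$ of the lower half plane is exactly the class with $\tilde G(\omega)=0$ for $\omega<0$. Either repair completes the proof; as written, the semicircle estimate is an unproved and genuinely nontrivial claim.
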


\begin{proof} Let us define $F(t) = [E(t)/\overline E-Z_0]/Z_0$ with $Z_0 = R \exp(i  \phi_0)$, and use this definition into the $G(t)$ given by Eq. (\ref{Ft}), to obtain $G(t) = \log(Z_0) + \log\left[1 + F(t) \right]$, where for the logarithm we assume its principal value. We then note that the expansion
\be G(t) = \log(Z_0) + \sum_{n=1}^\infty \frac{(-1)^{n+1}}n F^n(t), \label{Z_0} \ee
is legitimate for $|F(t)|<1$, that is inside the circle $|E(t)/\overline E-Z_0|< |Z_0|$ centered in $Z_0$ and of radius $|Z_0|=R$. In the limit $R \to \infty$, the expansion is then legitimate for $F(t)$ inside the half plane delimited by a straight line passing through the origin and orthogonal to $Z_0$ and containing $Z_0$, which is the region $F(t)$ belongs to by virtue of the condition $\mathrm{real}[E(t) \exp(i \phi_0)] > 0$, $\forall t \in \mathbb R$.  Being the spectrum of $E(t)$ zero for $\omega < 0$, the spectrum of $F(t)$ and of all its powers $F^n(t)$ is also zero for $\omega < 0$. The series expansion (\ref{Z_0}) then shows that the spectrum of $G(t)$ exists and it is zero for $\omega < 0$. If we chose the determination of the logarithm such that $G(t) = \log[E(t)/\bar E] \to 0$ as $|t| \to \infty$, then being $E(t) \ne 0, \forall t \in \mathbb R$ and being $G(t) \simeq E_s(t)/\bar E$ for $|t| \to \infty$ with $E_s(t) \in L^1 \bigcap L^2$, also $G(t) \in L^1 \bigcap L^2$. The function $G(t)$ then fulfills once again the conditions of lemma \ref{Le1} so that its real and imaginary parts are the Hilbert transform of one another, and from this the thesis is deduced. \end{proof}

Another sufficient condition, more restrictive than that given by corollary \ref{re} is that a signal $E(t) = \bar E + E_s(t)$ is of minimum phase if $|E_s(t)|^2 < |\overline{E}|^2$ for every $t$. This condition was explicitly stated in \cite{Burge} and \cite{Taylor} and, later, independently rediscovered by others, see for instance \cite{Cassioli} where the condition was given in the context of wireless channel characterizations, and \cite{Mecozzi} where the condition was applied to optical measurements. It is immediate to show that, if this condition is satisfied, $E(t)$ never encircles the origin and it is consequently of minimum phase. In addition, being
\be |E_s(t)|^2 \le \int \frac{\df \omega}{2 \pi} |\tilde E_s(\omega)|^2. \ee
condition $|\overline{E}|^2 > |E_s(t)|^2$ also includes the even more restrictive one
\be |\overline{E}|^2 > \int \frac{\df \omega}{2 \pi} |\tilde E_s(\omega)|^2 \ee 
given in \cite{Kino}.
 
Although we assumed $\overline E \ne 0$, the results obtained are valid for $\overline E$ arbitrarily small. The case $\overline E = 0$ is a delicate one, because in this case, $G(t) \notin L^2$, so that $\tilde G(\omega)$ is not defined.  In this case, however, the role of $\overline E$ is not essential for convergence of the logarithmic Hilbert transform (\ref{phi}). Indeed, the phase of every $E(t) \in \tilde \mathcal{C}_\beta(0,B)$ is well defined for every $\overline E$, and it does not diverge for $\overline E = 0$. For $\overline E \ne 0$, if $E(t)$ does not encircle the origin, $\phi(t)$ given by Eq. (\ref{phi}) gives the phase of $E(t)$. Consequently, if $E(t)$ does not encircle the origin the phase of $E(t)$ can be still calculated for $\overline E = 0$ using Eq. (\ref{phi}) in the limit $\overline E \to 0$, which for what said does not diverge. We may therefore remove in the following the condition $\overline E = 0$ assuming that this case is included as the limit for $\overline E \to 0$.

\begin{Co}  \label{Cou} Let the two fields $E_0(t) \in \tilde \mathcal{C}_\beta(0,B)$ and $E_0'(t) \in \tilde \mathcal{C}_\beta(0,B)$ having the same intensity $|E_0(t)|^2 = |E_0'(t)|^2$. If $E_0(t)$ and $E_0'(t)$ do not encircle the origin, then they are both minimum phase signals and hence $E_0(t) = E_0'(t) \exp(i \overline{\phi})$, $\forall t$, where $\overline{\phi}$ is an arbitrary constant phase. \end{Co}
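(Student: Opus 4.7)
The plan is to apply Theorem \ref{Th3} (interpreted in the limiting sense $\bar E \to 0$, as justified in the paragraph immediately preceding the corollary) to each of $E_0$ and $E_0'$ separately, and then to subtract the resulting phase formulas. Both signals belong to $\tilde{\mathcal{C}}_\beta(0,B)$, and by hypothesis neither trajectory encircles the origin. These are precisely the conditions under which Theorems \ref{Th1} and \ref{Th3} (extended to vanishing bias) guarantee that neither $E_0(z)$ nor $E_0'(z)$ has zeros in the open lower half-plane, so both fields are automatically minimum phase, and moreover that the phase of each is reproduced by the logarithmic Hilbert transform (\ref{phi}).

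Concretely, I would write
\[ \phi_0(t) = \overline{\phi}_0 + \frac{1}{2\pi}\,\mathrm{p.v.}\!\int_{-\infty}^{\infty} \frac{\log |E_0(t')|^2}{t'-t}\,\df t', \]
and the analogous identity for $\phi_0'(t)$ obtained by replacing $(\overline{\phi}_0, E_0)$ with $(\overline{\phi}_0', E_0')$. Since $|E_0(t')|^2 = |E_0'(t')|^2$ identically in $t'$, the two principal-value integrals are equal, so subtraction yields
\[ \phi_0(t) - \phi_0'(t) = \overline{\phi}_0 - \overline{\phi}_0' =: \overline{\phi}, \]
a $t$-independent constant. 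Combined with $|E_0(t)| = |E_0'(t)|$, this immediately gives $E_0(t) = E_0'(t) \exp(i \overline{\phi})$ for every $t \in \mathbb{R}$, and the minimum-phase conclusion is already in hand.

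The only delicate step, and the main obstacle, is invoking Theorem \ref{Th3} in the absence of an explicit DC bias: members of $\tilde{\mathcal{C}}_\beta(0,B)$ are written without a constant $\bar E$. The preceding paragraph in the paper already sanctions this move by applying the result to $E_0(t) + \bar E$ with $\bar E$ small but nonzero and then passing to the limit $\bar E \to 0^+$; the non-encirclement hypothesis ensures that the right-hand side of (\ref{phi}) is well defined and continuous in $\bar E$ down to zero. Once that limiting argument is granted, the remainder of the proof is a one-line consequence of the linearity of the transform in $\log|E|^2$.
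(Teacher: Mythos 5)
Your proof is correct and follows exactly the route the paper intends: the paper states this corollary without a separate proof, treating it as an immediate consequence of Theorem \ref{Th3} (applied in the $\bar E \to 0$ limiting sense discussed in the paragraph just before the corollary) together with the fact that the logarithmic Hilbert transform depends only on the common intensity profile, so the two phases can differ only by a constant. Your explicit handling of the vanishing-bias subtlety matches the paper's own justification, so there is nothing to add.
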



An interesting and not immediately obvious property is now the following:

\begin{Th} \label{Th0} Given an arbitrary field $E(t) \in \tilde \mathcal{C}_\beta(0,B)$ then
\be E_0(t) = |E(t)| \exp[i \phi_{0}(t)], \label{mp} \ee 
where
\be \phi_{0}(t)= \overline{\phi} + \frac{1}{2 \pi} \mathrm{p.v.} \int_{-\infty}^{\infty} \df t' \frac{\log\left[ |E(t')|^2 \right]}{t'-t}, \label{mp1} \ee
with $\overline{\phi}$ an arbitrary phase, is band-limited to the same interval of $E(t)$, namely $(0, 2 \pi B)$. \end{Th}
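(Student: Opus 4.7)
I would split the spectral support claim into two one-sided bounds. The bound $\tilde E_0(\omega)=0$ for $\omega<0$ is a direct consequence of the construction of $\phi_0$. Writing $G_0(t)=\log|E(t)|+i[\phi_0(t)-\bar{\phi}]$, equation (\ref{mp1}) says $\mathrm{Im}\,G_0$ is the Hilbert transform of $\mathrm{Re}\,G_0$, so $G_0$ is the boundary value of a function analytic in the lower half plane and $\tilde G_0$ is supported in $[0,\infty)$ (after a routine regularization $|E|^2\to|E|^2+\epsilon^2$ to handle the non-integrability of $\log|E|$ at infinity). Since $E_0=e^{i\bar{\phi}}e^{G_0}$, and the Fourier transform of each convolution power $G_0^{*n}$ remains supported in $[0,\infty)$, Taylor-expanding the exponential shows that $\tilde E_0$ also lives in $[0,\infty)$.

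The bound $\tilde E_0(\omega)=0$ for $\omega>2\pi B$ is where the real work lies, and I would prove it by Blaschke factorization of the analytic continuation. The function $E(z)$ is bounded in the closed lower half plane and satisfies $|E(z)|\leq C\exp(2\pi B\,\mathrm{Im}\,z)$ in the upper half plane (Paley--Wiener applied to $\tilde{\mathcal{C}}_\beta(0,B)$). Enumerating the zeros $\{z_k\}$ of $E(z)$ in the open lower half plane, the Blaschke product $B(z)=\prod_k(z-z_k)/(z-\overline{z_k})$ converges (the Blaschke summability condition is automatic for $H^2$ functions) and satisfies $|B|\equiv 1$ on $\mathbb{R}$, $|B|\leq 1$ below, and $|B|\geq 1$ above. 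Then $M(z)=E(z)/B(z)$ is entire with no zeros in the lower half plane, has modulus $|E(t)|$ on $\mathbb{R}$, and inherits the upper-half-plane bound $|M(z)|\leq|E(z)|\leq C\exp(2\pi B\,\mathrm{Im}\,z)$ because $|B|\geq 1$ there. Since $B$ is inner on the lower half plane, $M=E/B$ remains in $H^2$ of that half plane, so it is bounded and vanishes at infinity there; the Paley--Wiener theorem then gives $\tilde M$ supported in $[0,2\pi B]$.

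To conclude I would identify $E_0$ with $M$. Both are minimum-phase signals with common modulus $|E(t)|$ on $\mathbb{R}$, so $E_0/M$ is analytic in the lower half plane, in $H^\infty$, and of unit modulus on the real boundary; the maximum-modulus principle forces it to be a unimodular constant, so $E_0=cM$ and $\tilde E_0$ has the same support $(0,2\pi B)$ as $\tilde M$. The delicate point I expect is controlling $M$ inside the lower half plane: the naive inequality $|M|=|E|/|B|$ with $|B|\leq 1$ points the wrong way, so boundedness genuinely requires the Hardy-space fact that dividing an $H^2$ function by an inner function keeps it in $H^2$ (and one must also verify the Blaschke summability condition for the zeros of $E$ in this class). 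The remaining ingredients (Hilbert-transform positivity, Paley--Wiener, and the $H^\infty$ maximum principle) are routine once this analytic setup is in place.
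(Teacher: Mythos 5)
Your proposal is correct in substance and shares the paper's central idea --- remove the zeros of $E(z)$ in the open lower half plane by dividing out an all-pass factor of unit modulus on $\mathbb{R}$ --- but the execution is genuinely different. The paper removes one zero at a time with the single factor $H(t)=(t-t_0-i|\tau_0|)/(t-t_0+i|\tau_0|)$ and verifies band-limitation by computing the spectrum of $H(t)E(t)$ explicitly: the convolution term vanishes for $\omega<0$ because the spectral supports do not overlap, and for $\omega>2\pi B$ because the integral collapses to $E(t_0-i|\tau_0|)=0$; it then iterates. You instead form the full Blaschke product at once and deduce the support of $\tilde M$ from Paley--Wiener growth estimates, identifying $E_0$ with $M$ afterwards. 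Your route buys rigor where the paper is loose: it handles a possibly infinite zero set (the paper's ``repeat until all zeros are removed'' tacitly assumes finitely many, which is not automatic for an entire function of exponential type), and you correctly flag that the Blaschke condition and the $H^2$-division-by-inner-functions fact are the real load-bearing steps; the price is importing Hardy-space machinery where the paper needs only one explicit Fourier integral. Two points to tighten. First, your opening half --- deducing $\tilde E_0(\omega)=0$ for $\omega<0$ from the Hilbert-transform construction of $\phi_0$ --- is redundant once the Blaschke argument is in place, and as stated it is the shakier half: for $E\in\tilde{\mathcal{C}}_\beta(0,B)$ one has $|E(t)|\sim 1/t^2$ at infinity, so $\log|E(t)|\to-\infty$, $G_0\notin L^1\bigcap L^2$, and the term-by-term exponentiation needs more than a ``routine regularization.'' Second, in the final identification, ``$H^\infty$ with unit boundary modulus'' does not by itself force constancy (a Blaschke factor is a counterexample); you need that both $E_0/M$ and $M/E_0$ lie in $H^\infty$ of the lower half plane --- i.e., that both functions are zero-free there, which is exactly the minimum-phase property you have established for each --- and then the two maximum-modulus bounds give $|E_0/M|\equiv 1$ and hence constancy.
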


\begin{proof} This theorem is parallel to an analogous one in \cite{Hofstetter} that gives the condition for a function to be the auto-convolution of a time limited function. If the analytic continuation of $E(t)$, namely  $E(t+i\tau)$, has no zeros for $\tau <0$, then we have $E(t) = E_0(t)$ for a suitable value of $\overline{\phi}$. If this is not the case, any zero of $E(z)$ in the lower complex half plane, say $z_0 = t_0 - i |\tau_0|$, can be removed by multiplication by the pure phase modulation
\be H(t) = \frac{t-t_0 - i |\tau_0|}{ t-t_0 + i |\tau_0|} = 1 - \frac{ 2 }{1-i(t-t_0)/|\tau_0|}, \ee
which adds a zero in the upper half plane symmetrically placed with respect to the real axis. The spectrum of the field $E'(t)$ after the phase modulation is
\bea \tilde E'(\omega) &=& \tilde E(\omega) -2 |\tau_0| \int \frac{\df \omega'}{2 \pi} \tilde E(\omega') u(\omega-\omega') \nonumber \\
&& \times \exp\left[ i (\omega-\omega') (t_0-i |\tau_0|) \right], \label{Ep} \eea
where $u(\cdot)$ is the unit step function. The integral at right hand side of (\ref{Ep}) is zero for $\omega < 0$ because the non-zero regions of $E(\omega')$ and $u(\omega-\omega')$ do not overlap, and it is also zero for $\omega>2 \pi B$ because in this case $u(\omega-\omega')$ can be replaced by $1$ and
\be \int \frac{\df \omega'}{2 \pi} \tilde E(\omega') \exp\left[ -i \omega' (t_0-i |\tau_0|) \right] = E(t_0-i |\tau_0|) = 0. \ee
After the phase modulation, the spectrum is still zero for $\omega < 0$ and $\omega > 2 \pi B$, and the zero at $t_0 - i |\tau_0|$ is replaced by a zero at $t_0 + i |\tau_0|$. This procedure can be repeated until all the zeros in the lower complex half plane are removed. At the end, the resulting field is a minimum phase signal still bandwidth limited to the same bandwidth of the original signal. Being the minimum phase signal unique, this signal is equal to $E_0(t)$ given by Eq. (\ref{mp}).  \end{proof} 

Let us now consider the class of functions $\tilde \mathcal{C}(0,B)$ band-limited in the interval $(0,B)$, obtained from $\tilde \mathcal{C}_\beta(0,B)$ in the limit of $\beta \to 0^+$. Theorem \ref{Th0} insures that the class of band-limited signals with a common intensity profile always includes the minimum phase signal. To be more precise, theorem \ref{Th0} insures that if one groups the functions $E(t) \in \tilde \mathcal{C} (0,B)$ into functions with the same intensity profile $I(t) = |E(t)|^2$, any of these classes always includes the minimum phase function $E_0(t) \in \tilde \mathcal{C} (0,B)$. Corollary \ref{Cou} than shows that if the minimum phase condition is met, then the intensity profile uniquely determine the function $E_0(t)$ with the exception of an immaterial rotation of the complex plane. From now on, we will refer for convenience to $E(t)$ as the equivalence class of functions that differ from one another by an arbitrary constant phase factor, and with this caveat corollary \ref{Cou} states that the minimum phase signal is unique. Consequently, all possible intensity profiles $I(t) = |E(t)|^2$ with $E(t) \in \tilde \mathcal{C} (0,B)$ can be set into a one-to-one correspondence with the class of minimum phase functions $E_0(t)$.

From theorem \ref{Th0} we may also draw the important consequence that the ratio between a signal $E(t) \in \tilde \mathcal{C}(0,B)$ that encircles $N$ times the origin and the minimum phase signal $E_0(t)$ with the same intensity profile $|E_0(t)|^2 = |E(t)|^2$, which we define as
\be H_N(t) = \frac{E(t)}{E_0(t)} , \ee
has the form
\be H_N(t) = \prod_{k=1}^N \left[1 - \frac{ 2 }{1-i(t-t_k)/|\tau_k|} \right], \label{HN} \ee 
where $t_k - i |\tau_k|$ are the $N$ zeros of $E(t+i \tau)$ in the lower complex half plane. The function $H_N(t)$ is the product of terms which are 1 minus a Lorentzian line-shape, centered on the real parts $t_k$ and of width equal to the modulus of the imaginary part $|\tau_k|$ of each zero in the lower complex half plane. When these Lorentzian line-shapes are well separated, which may occur when the number $N$ is small, we have
\be H_N(t) \simeq 1 - \sum_{k=1}^N \frac{2}{1-i(t-t_k)/|\tau_k|}, \label{HNapp} \ee 
so that $H_N(t)$ is mostly 1 with the exception of small regions of amplitude $|\tau_k|$ around the time $t_k$ where the deviation from unit has a Lorentzian shape. The magnitude of the imaginary parts of the zeros in the lower complex half plane $|\tau_k|$ is proportional to the amplitude of the region around $t_k$ where the reconstruction of the phase with the logarithmic Hilbert transform is inaccurate. 

\section{Encoding information on the intensity of an electromagnetic field} 

Let us now consider the following problem: What is the most efficient transmission over a bandwidth $B=1/T$ for the complex field $E(t)$ if we have the capability of modulating the optical field in modulus and phase and we perform square-law intensity detection of the transmitted signal at the receiver? The study of transmission systems where only direct detection is applied at the receiver became lately an intense area of research, targeting high capacity applications in the short reach range \cite{Randel_invited,DMT,Lowery,Lowery1,Petermann,Randel}. Although the information is not contained in the phase of the optical field, a suitable phase shaping is required to confine the signal within a limited bandwidth. One could in principle think of generating an arbitrary amplitude modulation at bandwidth $2B$ and then use a phase modulation equal to the logarithmic Hilbert transform of the intensity that makes this signal single sideband, as proposed in \cite{Petermann}, but in general the spectrum of the signal after modulation is not zero for $\omega > 2 \pi B$, and is in principle unbounded for positive frequencies. 

The analysis of the present paper gives a general solution to this problem. We have shown that all possible field patterns of a given optical bandwidth can be grouped into classes of fields with the same intensity pattern, and theorem \ref{Th0} ensured that to each of these classes belongs the minimum phase signal $E_0$. Restricting the set of symbols to minimum phase signals therefore does not reduce the set of symbols available to transmission. The recently proposed Kramers Kronig (KK) transmission scheme \cite{KK} is capable of receiving without errors any minimum phase signal $E_0(t)$ and hence it maximizes the set of symbols that can be received over a given optical bandwidth. The principle of operation of the receiver is very simple: the receiver detects the intensity profile only, and the phase of the field is calculated from the intensity profile using the logarithmic Hilbert transform (\ref{phi}) over an integration time window much longer than the inverse of the optical field bandwidth. Assuming that chromatic dispersion is optically compensated or pre-compensated at the transmitter, we will show in the section that follows that a signal constellation like (\ref{mod}) with real$(a_n) > a$ and imag$(a_n) > a$ with $a$ a suitably chosen positive constant  produces a minimum phase signal. It is enough, in general, to choose $a$ much smaller than the range of possible values of $a_n$, so that the available symbols are almost all those belonging to one quadrant, say the first, of the complex plane, one quarter of all possible symbols available on the two quadratures. If some additive noise impairs the signal field, and if the noise is small enough that the perturbed field does not encircle the origin, then the perturbed field is still of minimum phase and faithfully detected by the receiver. Therefore, in spite of the fact that the quadratic receiver detects the field intensity only, and that the phase is calculated from the intensity profile and not independently measured, the receiver acts as a linear receiver (in the field) that is capable of detecting signals in the first quadrant only. Consequently, for high signal-to noise ratio, we may conjecture that the capacity of such system is of the order of the capacity (per unit bandwidth) of a coherent system in which the symbols are constrained in the first quadrant only, approximately two bit less than the capacity of a full, unconstrained, coherent system \cite{KK,CapacityIMDD}.

\section{Numerical validation} \label{num}

We numerically validated the results of the previous sections by testing on a signal made as the sequence of $512$ waveforms of the form $(\ref{mod})$ with $a_n = [(b + k_1) + i(b + k_2)] \sqrt{T}$, where $n$ runs from $1$ to $512$ and $k_1$ and $k_2$ are randomly chosen between $0$ and $7$. This signal corresponds to a shifted 16 quadrature-amplitude modulation (16QAM) constellation. The bias $b$ was chosen equal to two values, one $b_3 = 0.5$ insufficient to make the waveform of minimum phase, and the second $b_0 = 1.1$, that makes the waveform a minimum phase one. 
\begin{figure}[!t]
\centering
\includegraphics[width=2.5in]{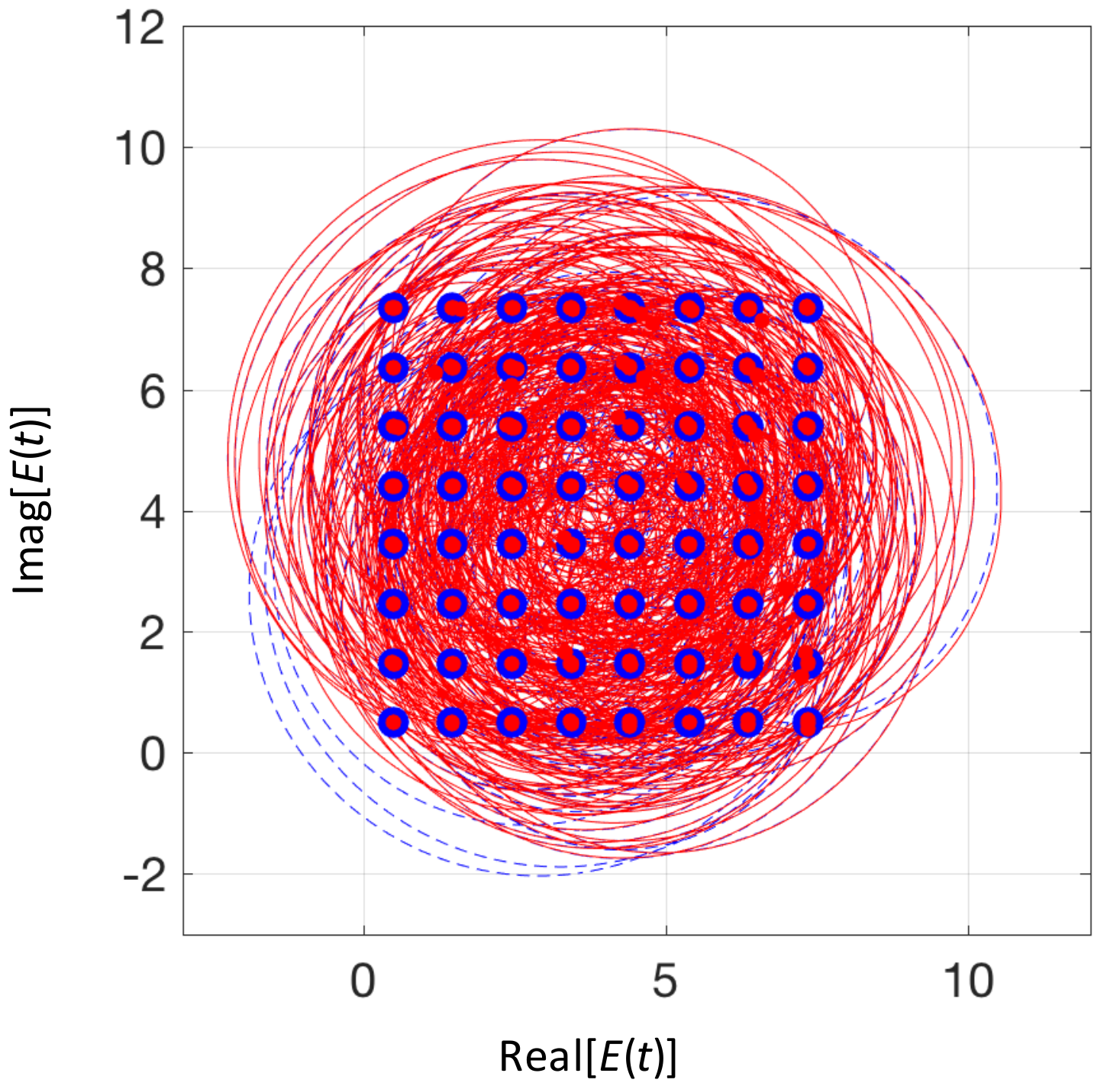}
\caption{Complex field of a shifted 16QAM. Red solid line: reconstructed field $E_0(t)$, blue dashed line, detected field $E(t)$, for a bias $b_3 = 0.5$.}
\label{fig1}
\end{figure}

The numerical analysis was performed using a Marlab program, performing the Hilbert transform over a finite time window $T_w$ that in the examples given was chosen as $T_w = 512 \, T$, by multiplication of the spectrum of the signal by $-i \, \mathrm{sgn}(\omega)$ and using a fast Fourier transform (FFT) and inverse FFT routines, i.e. implicitly assuming a periodic signal instead of an $L^2$ one. The analysis of the periodic case implied by the use of the FFT can be done by replacing the kernel of the Hilbert transform $1/t$ by its $T_w$-periodic counterpart
\be \sum_{k=-\infty}^\infty \frac 1 {t - k T_w} = \frac {\pi} {T_w} \mathrm{cot}\left(\frac{\pi t}{T_w} \right), \quad t \ne \tau, \label{cot} \ee
and limiting the integration over $\tau$ to a single period $[-T_w/2, T_w/2)$. This corresponds to replace Fourier integrals with Fourier series. All the theorems that we have discussed retain their validity. Theorem \ref{Th1} for instance can be shown without closing the curve $\Gamma$ with a semicircle $C$ at infinity, because when $t$ spans on the real axis an entire period the function $E(t)$ describes in the complex plane a closed curve. While the rigorous analysis of the periodic case is beyond the scope of this paper, the correspondence between the periodic and the $L^2$ case can be obtained considering the case $T_w \to \infty$. In the $L^2$ case, the field $\overline E$ is the time average of the field $E(t)$ 
\be \overline E = \lim_{T_w \to \infty} \frac 1 {T_w} \int_{-T_w/2}^{T_w/2} E(t) \df t, \ee
so that in the periodic case the role of the field bias $\overline E$ is played by the average of the signal over the length of the symbol sequence $T_w$
\be \overline E_\mathrm{av} = \frac 1 {T_w} \int_{-T_w/2}^{T_w/2} \df t E(t). \label{Eav} \ee
An intuitive understanding of this correspondence can be obtained by looking at the spectrum of $E(t) = E_s(t) + \bar E$, namely $\tilde E(\omega) = \tilde E_s(\omega) + 2 \pi \bar E \delta(\omega)$, and the spectrum of the periodic sequence. Both $\overline E$ and $\overline E_\mathrm{av}$ are the amplitude of the spectral component at zero frequency, the Dirac delta function in $\tilde E(\omega)$ being replaced by the amplitude of the discrete spectral component at zero frequency in the periodic case.

Figure \ref{fig1} shows by a blue dashed line a field $E(t)$ obtained with the lowest value of the bias, and by a red solid line the curve $E_0(t)$ reconstructed by the logarithmic Hilbert transform. Big blue dots represent the values of the $E(t)$ at $t = nT_w$, and red smaller dots the values of the reconstructed field at the same times. The curve $E(t)$ has a winding number around the origin of 3. The red solid curve overlaps with the blue dashed one almost everywhere, with the exception of the vicinity of the points where the windings occur. The accuracy of the reconstruction is more evident if we plot the phase reconstructed by the logarithmic Hilbert transform on top of the phase of $E(t)$. This comparison is shown in Figs. \ref{fig2} and \ref{fig3}, where we show by a red solid line the reconstructed phase and by a blue dashed line the phase of $E(t)$. In Fig. \ref{fig2} three phase jumps corresponding to the three windings of $E(t)$ around the origin are clearly visible. Figure \ref{fig3} is the zoom of the plot of Fig. \ref{fig2} in the vicinity of the second phase jump, showing that the deviation of the phase reconstruction from the phase of $E(t)$ is confined to a small region around the jump. 

The numerical results confirm the conjecture proposed the previous section that the constraint of square-law intensity detection reduces the number of available symbols by approximately one quarter, implying a capacity reduction of two bits over the full coherent detection for the same optical bandwidth.
\begin{figure}[!t]
\centering
\includegraphics[width=2.5in]{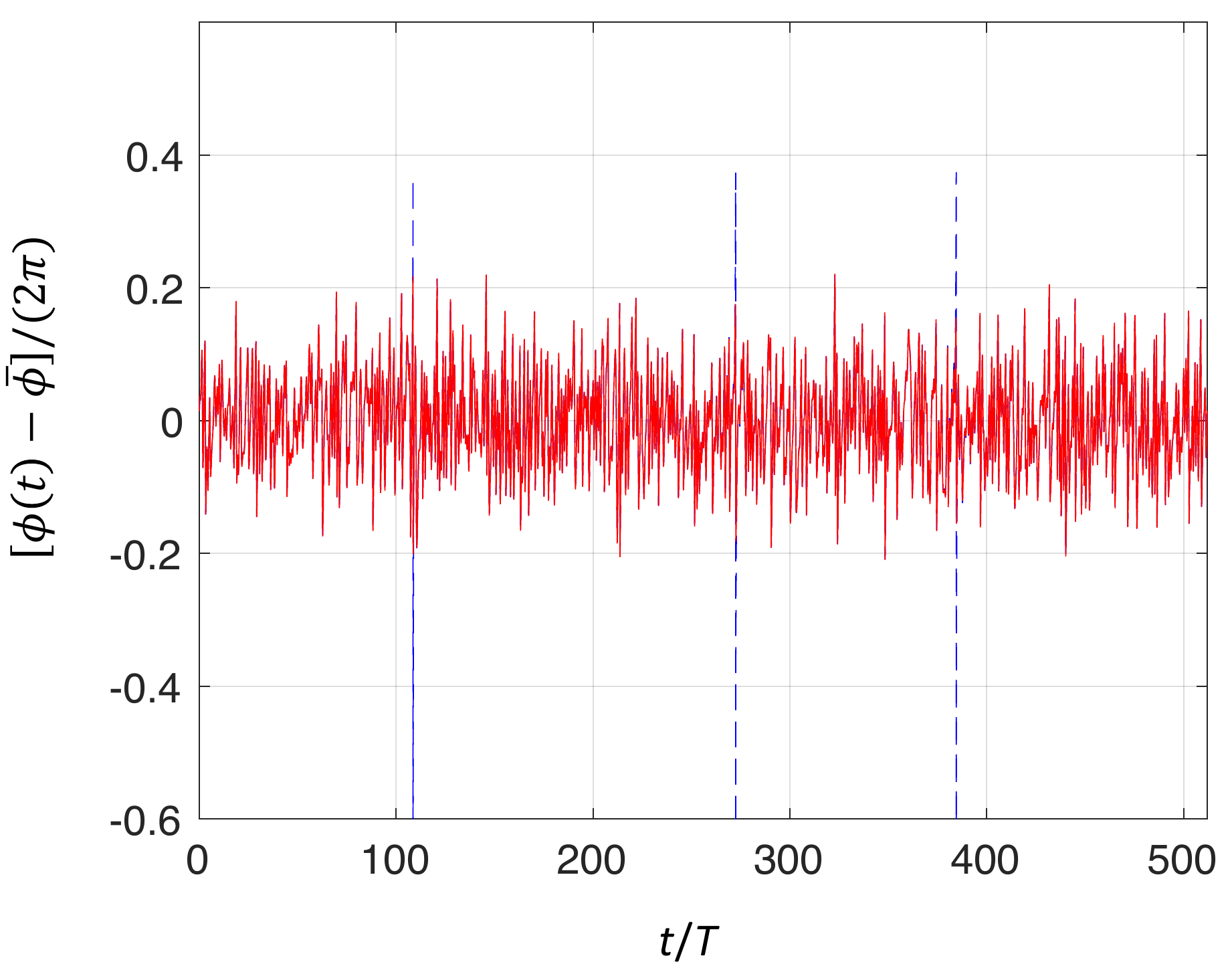}
\caption{Phase of a shifted 16QAM. Red solid line: reconstructed phase profile, blue dashed line, phase of $E(t)$, vs. $t/T$, for a bias $b_3 = 0.5$.}
\label{fig2}
\end{figure}
\begin{figure}[!t]
\centering
\includegraphics[width=2.5in]{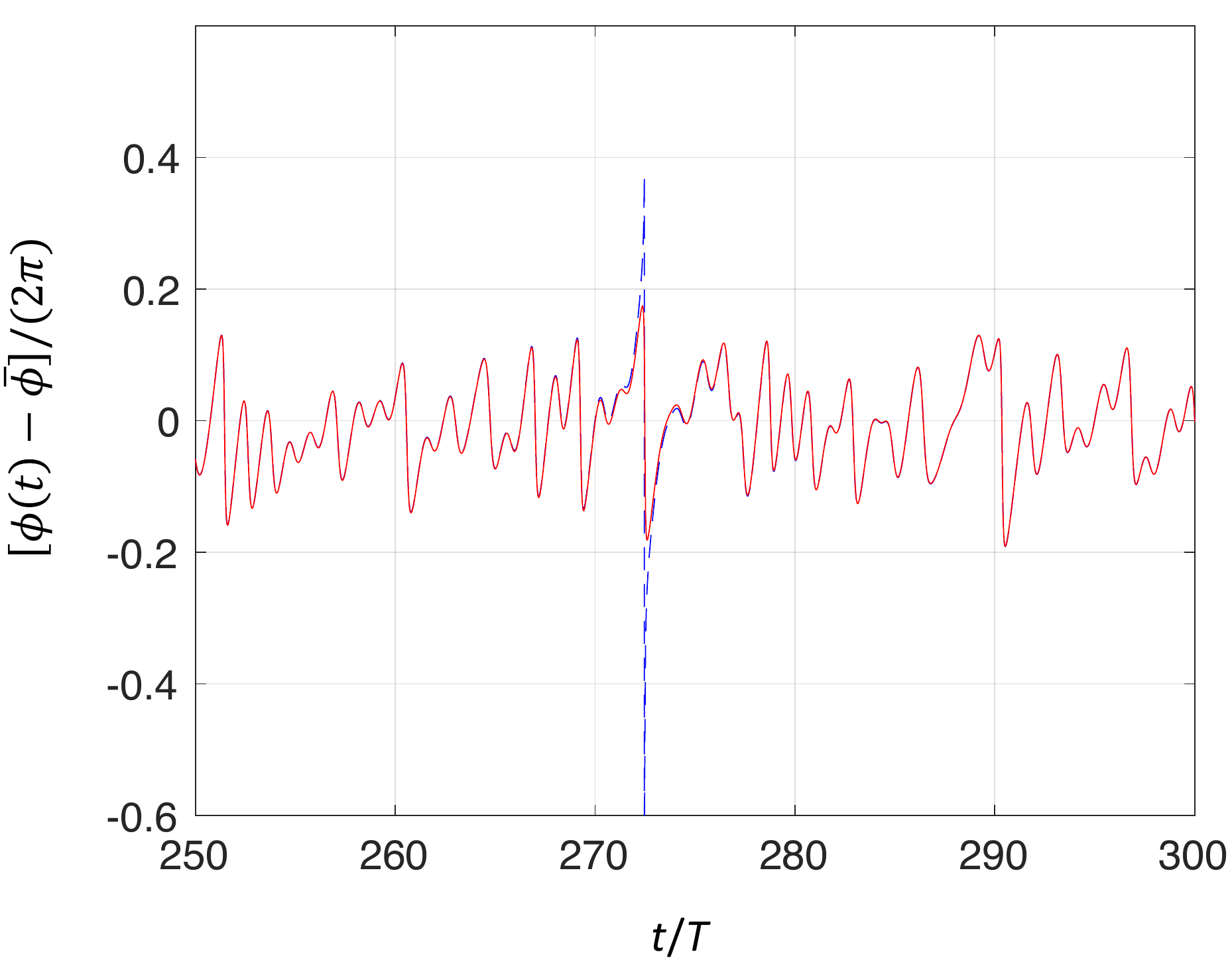}
\caption{Phase of a shifted 16QAM. Red solid line: reconstructed phase profile, blue dashed line, phase of $E(t)$, vs. $t/T$, for a bias $b_3 = 0.5$. This is the zoom in the region of the second phase jump of Fig. \ref{fig2}}
\label{fig3}
\end{figure}
%

%
%
In Figs. \ref{fig4} we show the ratio $|H_3(t) -1| = |E(t)/E_0(t) - 1|$ for the field $E(t)$ of Fig. \ref{fig1}. Figure \ref{fig4} clearly shows the three Lorentzian of amplitude 2 corresponding to the three encircling of the origin of $E(t)$. Figure \ref{fig5} shows by a blue solid thin line the same curve in a semilogarithmic scale, and by a red dashed thick line an interpolation with the curve
\be  H_3(t) - 1 = \prod_{k=1}^3 \left[1 - \sum_{h=-\infty}^{\infty} \frac{ 2 }{1 - i (t-t_k - h T_w)/|\tau_k|}\right], \label{H3} \ee
which, using Eq. (\ref{cot}), becomes
\be  H_3(t) - 1 = \prod_{k=1}^3 \left[1 - \frac{2 \pi i |\tau_k|}{T_w} \mathrm{cot} \left(\pi \frac{t-t_k+i |\tau_k|}{T_w} \right) \right]. \label{H3bis} \ee
Equation (\ref{H3}) was obtained adapting Eq. (\ref{HN}) to account for the temporal periodicity induced by the use of the FFT algorithm, which introduces an infinite number of replicas of the Lorentzian line-shapes spaced by the time window $T_w$. The parameters were obtained by interpolation of the main peaks only and were for the real parts $t_1/T_w=108.5755$, $t_2/T_w = 272.4868$, and $t_3/T_w = 384.5205$, and for the imaginary parts $|\tau_1|/T_w = 0.0070$, $|\tau_2|/T_w = 0.027$ and $|\tau_3|/T_w = 0.041$. As shown in Fig \ref{fig4}, the expression given in Eq. (\ref{H3bis}) was accurate more than six orders of magnitude down to the main peaks. However, a single Lorentzian, i.e. only the dominant term with $h =0$ in Eq. (\ref{H3}), is accurate 2 orders of magnitudes down to the main peak, and is sufficient to exactly reproduce the plot in linear scale shown in Fig. \ref{fig3}. From the average time and the temporal width of each Lorentzian line-shape we were able to compute, with high accuracy, the position of the zeros of $E(t+i\tau)$ in the lower complex half-plane without the need of numerical analytic continuation of $E(t)$.

\begin{figure}[!t]
\centering
\includegraphics[width=2.5in]{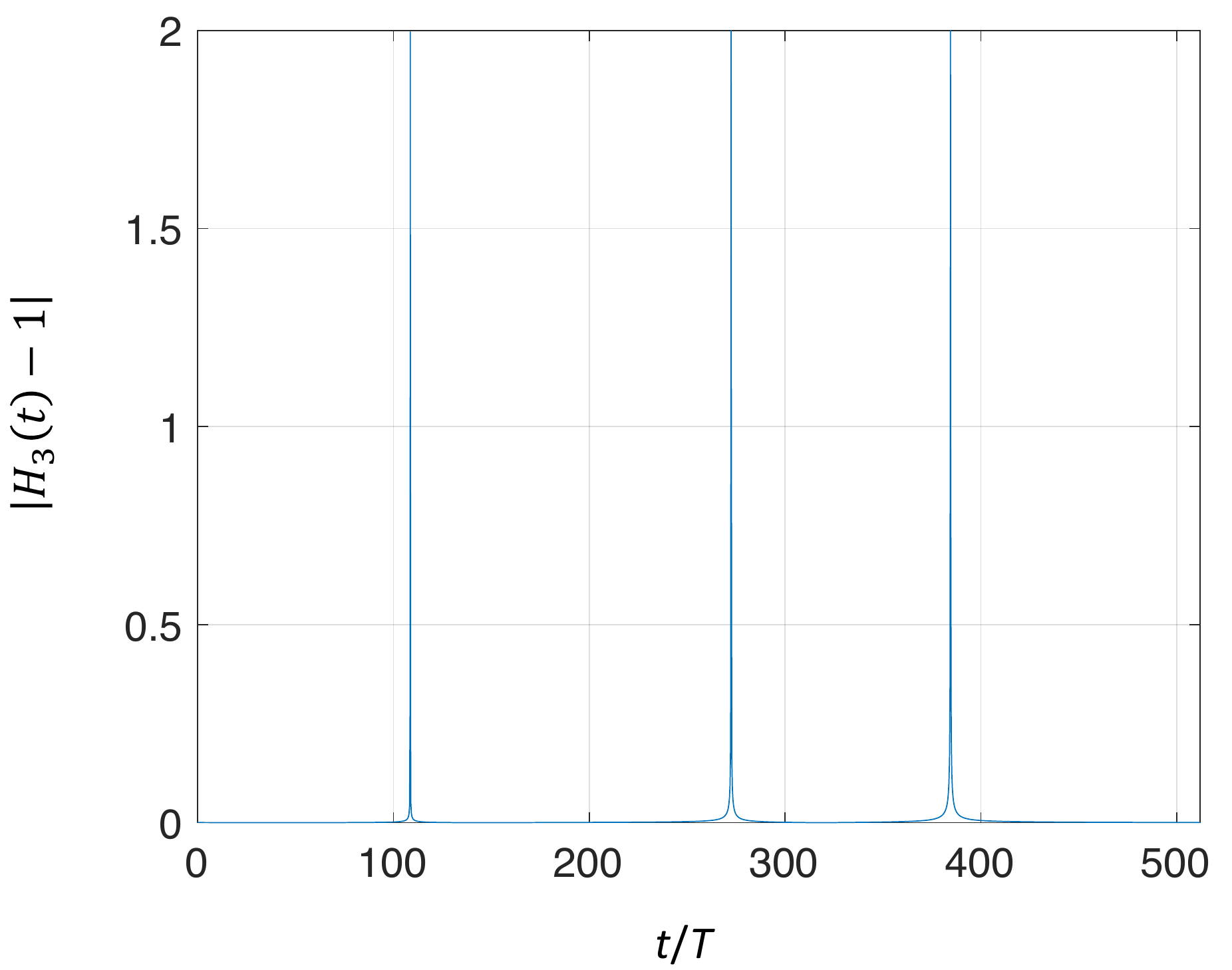}
\caption{$|H_3(t)-1|$ vs. $t/T$ for a bias $b_3 = 1.1$.}
\label{fig4}
\end{figure}
\begin{figure}[!t]
\centering
\includegraphics[width=2.5in]{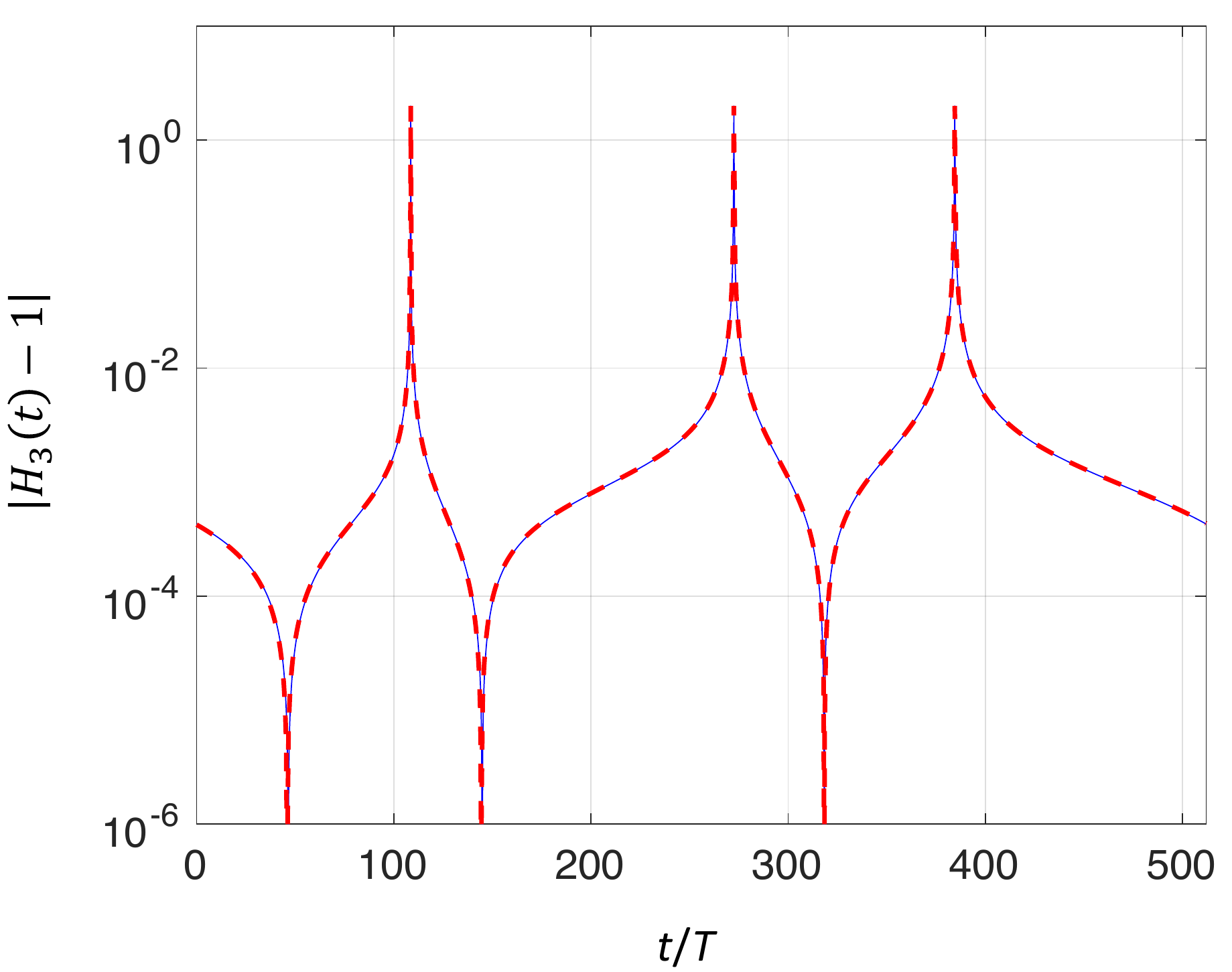}
\caption{$|H_3(t)-1|$ vs. $t/T$ for a bias $b_3 = 1.1$ shown in Fig. \ref{fig4}, in a semilogarithmic scale. The blue solid thin line is the numerical result reported in Fig. \ref{fig4}, the red dashed thick line is the interpolation using Eq. (\ref{H3}).}
\label{fig5}
\end{figure}

Figures \ref{fig6}--\ref{fig8} shows the curves obtained with the same $a_n$ sequence used for Figs. \ref{fig1}--\ref{fig5} but with a larger value of the bias $b_0 = 1.1$. In this case, there are no windings of $E(t)$ around the origin, and the reconstructed field coincides with the original one, $E_0(t) \equiv E(t)$. Figure \ref{fig1} shows the perfect reconstruction of the field at the sampling point by the overlap of the red and the big blue dots.

\begin{figure}[!t]
\centering
\includegraphics[width=2.5in]{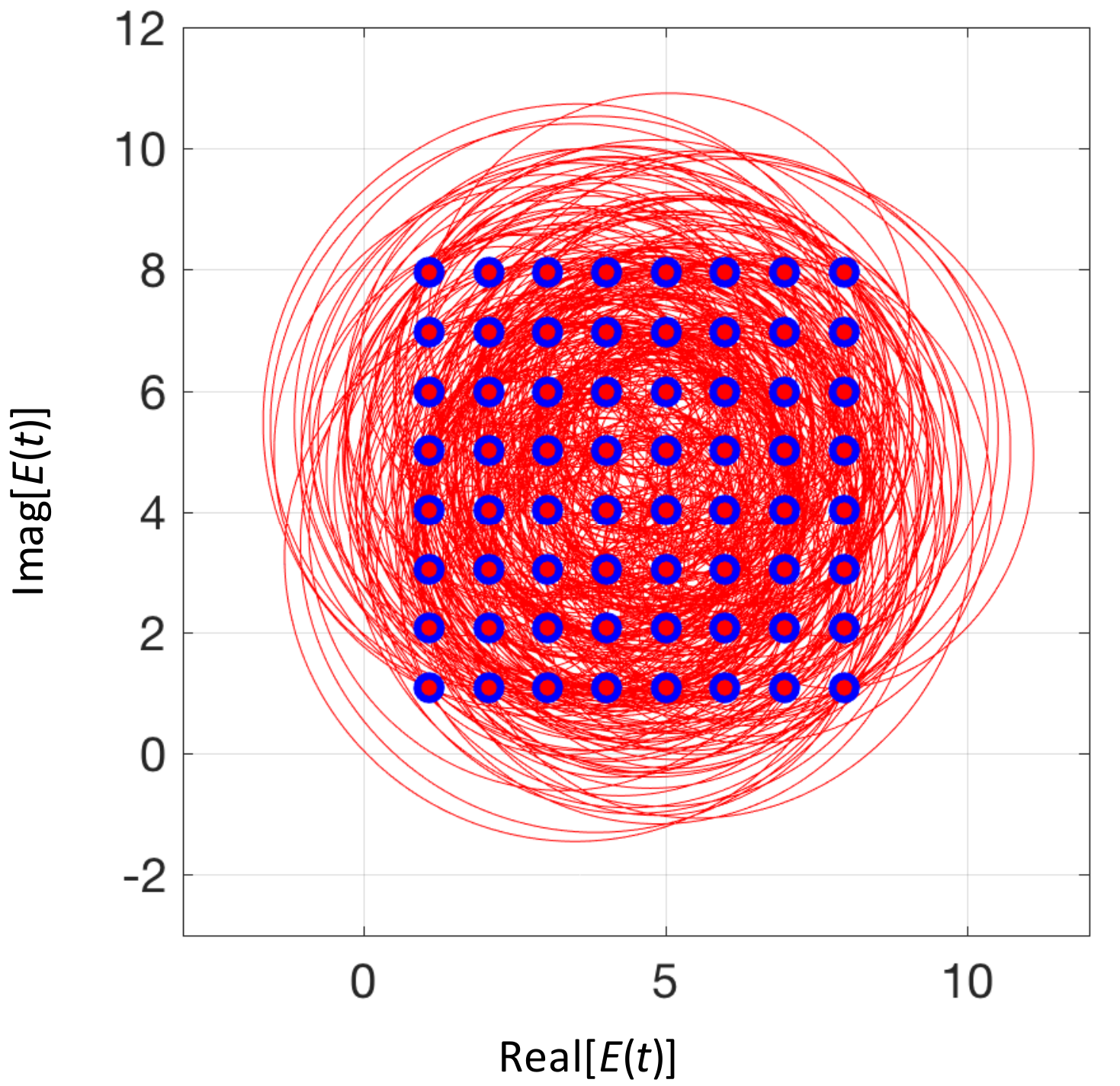}
\caption{Complex field of a shifted 16QAM. Red solid line: reconstructed field $E_0(t)$, blue dashed line, detected field $E(t)$, for a bias $b_3 = 1.1$.}
\label{fig6}
\end{figure}
\begin{figure}[!t]
\centering
\includegraphics[width=2.5in]{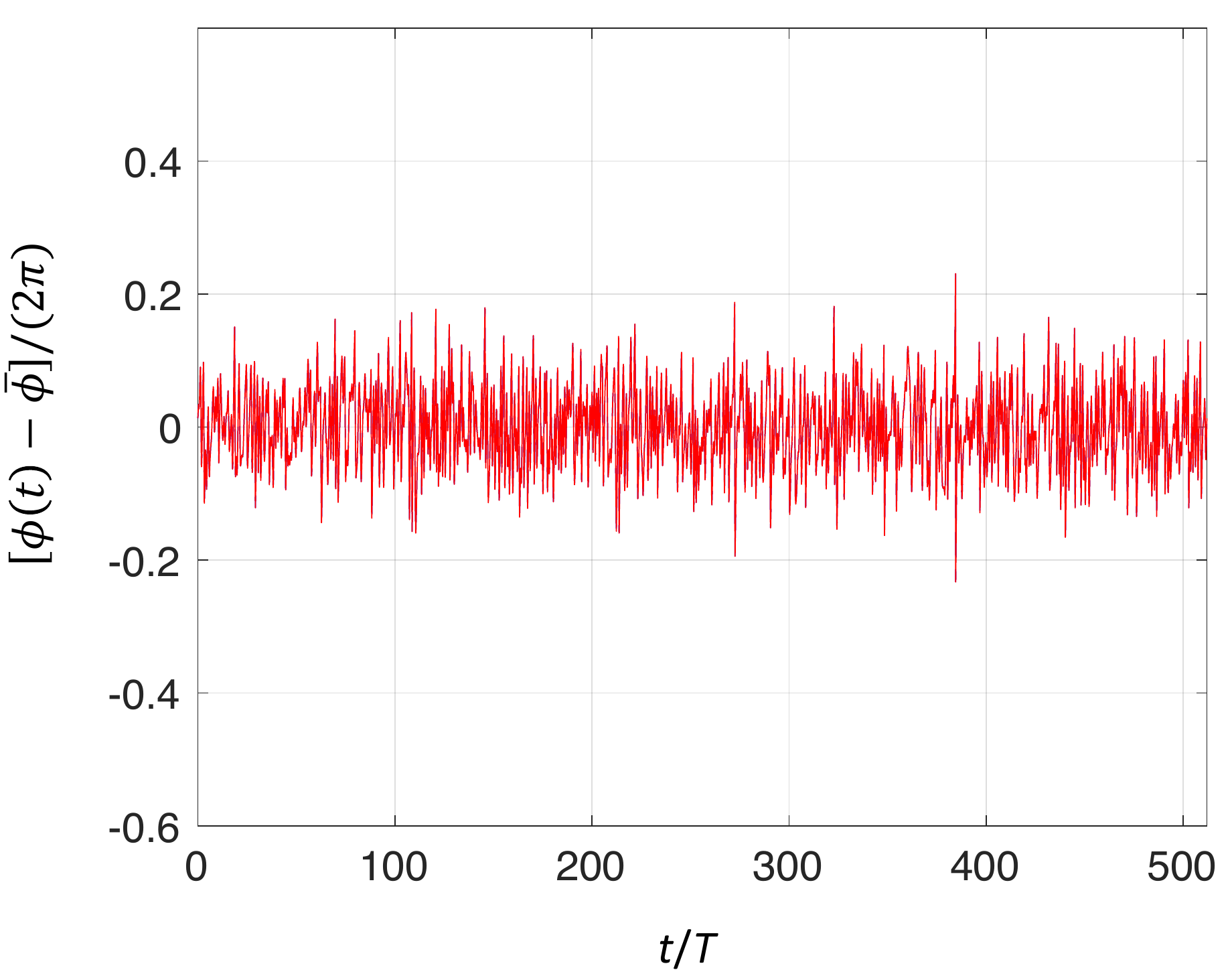}
\caption{Phase of a shifted 16QAM. Red solid line: reconstructed phase profile, blue dashed line, phase of $E(t)$, vs. $t/T$, for a bias $b_3 = 1.1$. The smaller excursion of the phase shown here compared with that of Fig. \ref{fig4} is caused by the larger distance from the origin due to the larger bias.}
\label{fig7}
\end{figure}
\begin{figure}[!t]
\centering
\includegraphics[width=2.5in]{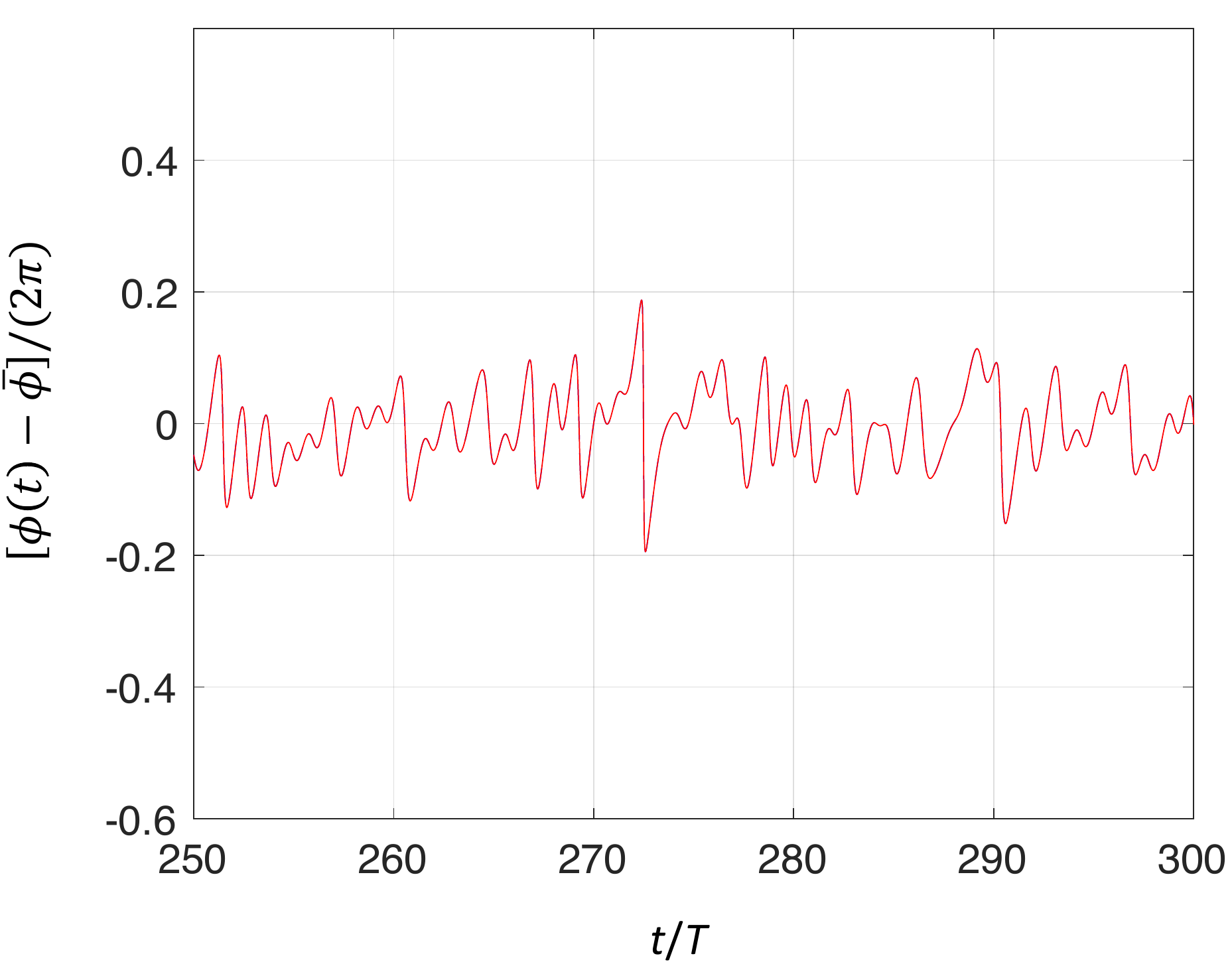}
\caption{Phase of a shifted 16QAM. Red solid line: reconstructed phase profile, blue dashed line, phase of $E(t)$, vs. $t/T$, for a bias $b_3 = 1.1$. This is the zoom in the same region of the phase jump of Fig. \ref{fig5}.}
\label{fig8}
\end{figure}

The example given in Figs. \ref{fig4}--\ref{fig6} corresponds to a case in which beside the condition of no windings around the origin of the trajectories of $E(t)$ also the condition $|E_s(t)|^2 < \overline E^2$ was fulfilled. Figures \ref{fig9} and \ref{fig10} show instead a case where the condition $|E_s(t)|^2 < \overline E^2$ fails but the reconstruction of the phase and consequently of the field from the intensity profile is perfect because no windings around the origin occur. These curves were obtained using $a_n = (b + k) \sqrt{T}$, with $n$ running from $1$ to $512$ and $k$ randomly chosen between $0$ and $7$, corresponding to an amplitude modulation with 8 levels. The value of $b$ was $b_0 = 0.1$.
\begin{figure}[!t]
\centering
\includegraphics[width=2.5in]{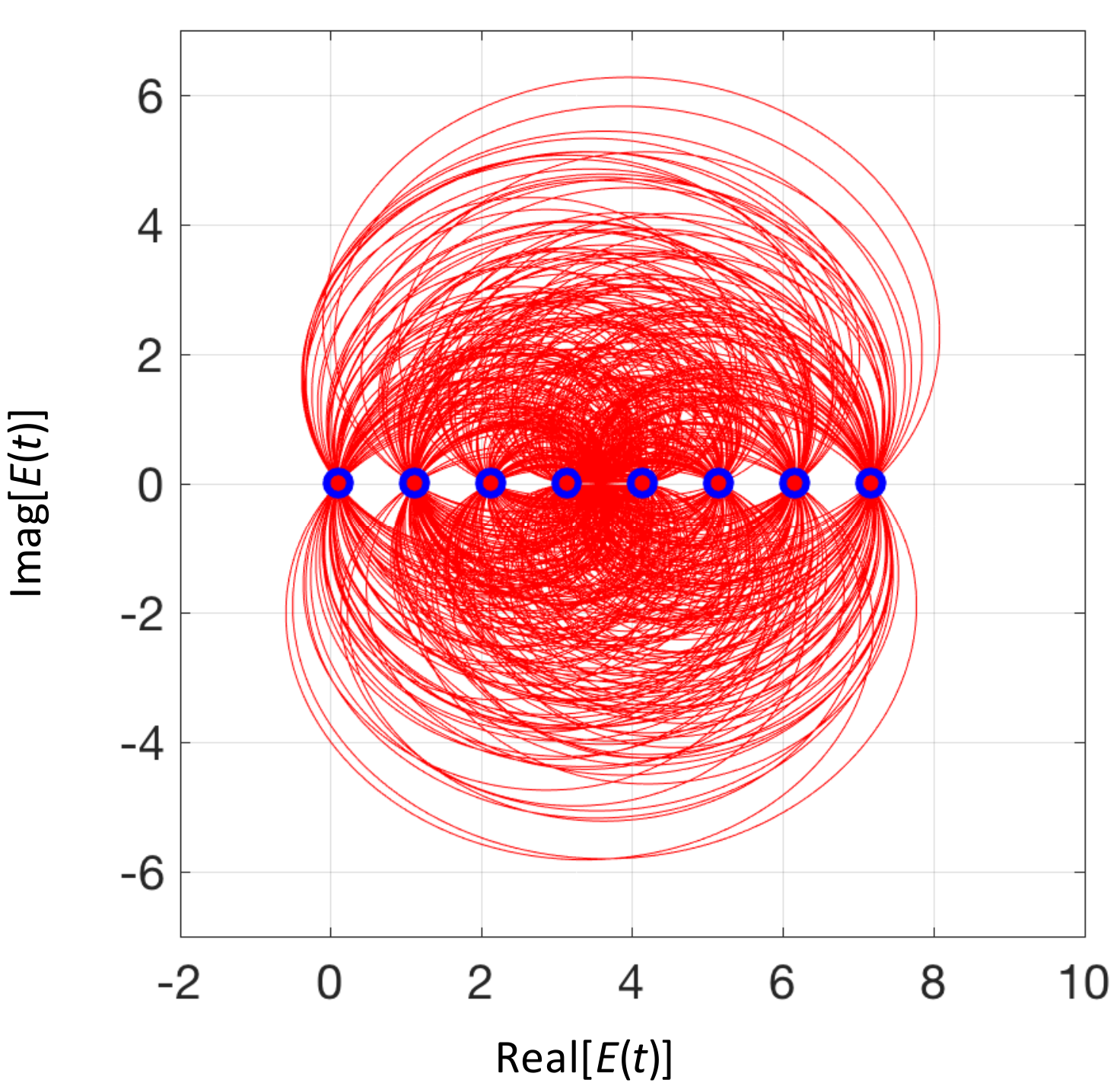}
\caption{Complex field of a shifted 8AM. Red solid line: reconstructed field $E_0(t)$, blue dashed line, detected field $E(t)$, for a bias $b_0 = 0.1$.}
\label{fig9}
\end{figure}
\begin{figure}[!t]
\centering
\includegraphics[width=2.5in]{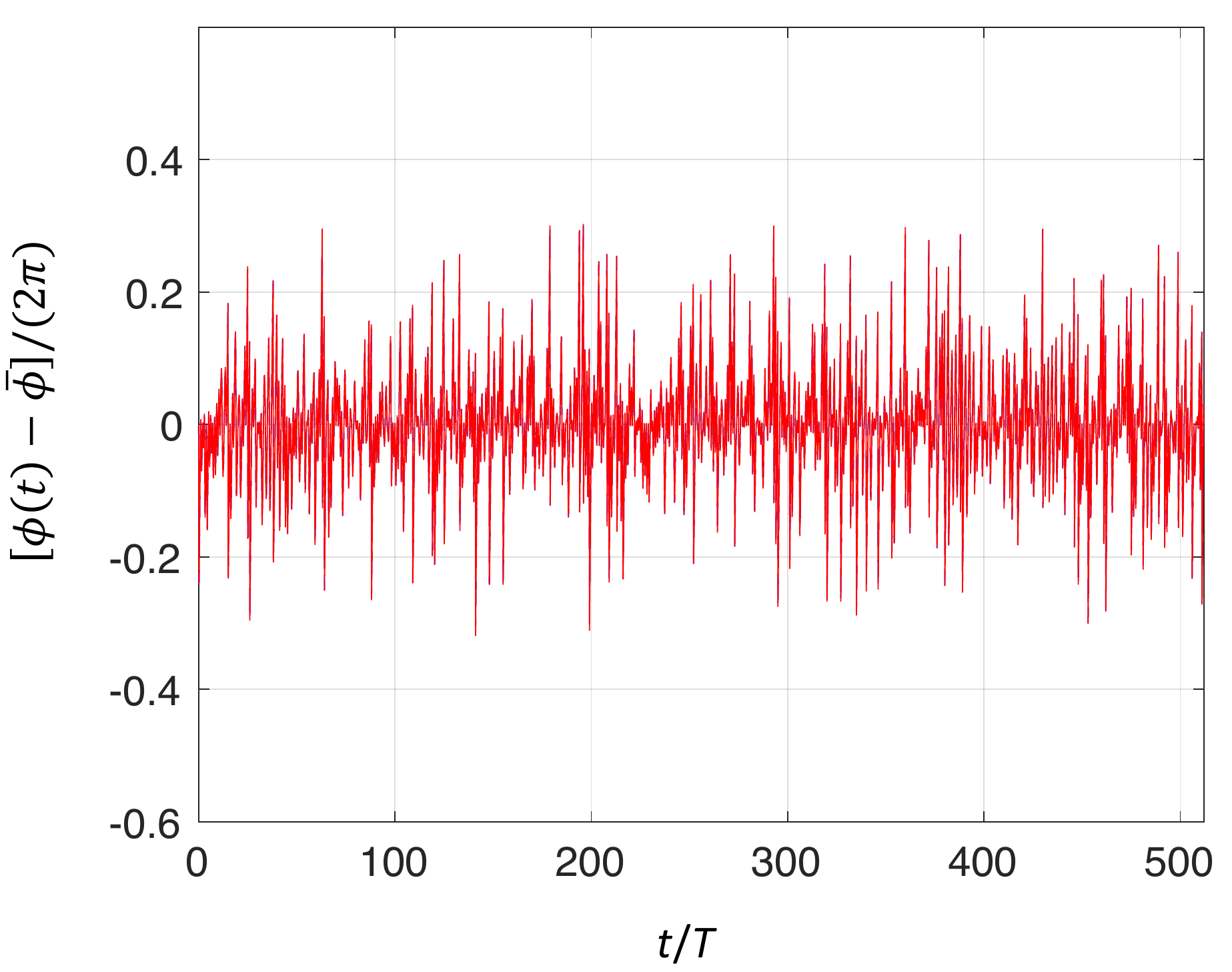}
\caption{Phase of a shifted 8AM. Red solid line: reconstructed phase profile, blue dashed line, phase of $E(t)$, vs. $t/T$, for a bias $b_0 = 0.1$.}
\label{fig10}
\end{figure}

\section{Time-frequency duality} \label{tfd} 

In this paper, we have considered functions $E(t)$ whose Fourier transform $\tilde E(\omega)$ was zero for $\omega < 0$, and our goal was to calculate the phase of $E(t)$ once its intensity $|E(t)|^2$ was measured. Once the necessary changes are made, this case has a one to one correspondence with the case studied in \cite{Fienup,Taylor,Shechtman,Cassioli,Mecozzi} of causal functions $E(t)$, i.e. such that $E(t) = 0$ for $t \le 0$, where the goal was to find the phase of $\tilde E(\omega)$ once the power spectrum $|E(\omega)|^2$ was measured over a range of frequency. The necessary changes include, for instance, the fact that the property $\tilde E(\omega) =0$ for all $\omega < 0$ implied, as we have seen, that the analytic continuation of $E(t)$ in the complex plane does not have poles in the lower complex half plane, whereas the causality of $E(t)$ considered in \cite{Fienup,Taylor,Shechtman,Cassioli,Mecozzi}, namely $E(t) = 0$ for all $t \le 0$, implied that the analytic continuation of $\tilde E(\omega)$ in the complex $\omega$ plane had no poles in the upper complex half plane. The necessary and sufficient condition becomes that $\tilde E(\omega)$ does not encircle the origin when $\omega$ goes from $-\infty$ to $\infty$

\section{Conclusions}

We have given a necessary and sufficient condition for a function $E(t)$ to be of minimum phase, and hence for its phase to be univocally determined by its intensity $|E(t)|^2$. The check of this condition requires only the plot of the function $E(t)$ for $t$ belonging to the real axis, and does not require the analytic continuation of $E(t)$ in the complex plane. We have shown that sufficient conditions previously proposed can be simply derived from this more general one. As an application to communication systems, we find that the recently proposed KK transmission scheme gives a practical way to decode all distinguishable band-limited fields when the detector is sensitive only to the intensity of the field and insensitive to its phase.

%

\end{document}